\theoremstyle{plain} 
\newtheorem{prop}{Proposition}
\newtheorem{lemma}{Lemma}
\newtheorem{rmk}{Remark}
\newtheorem{conjecture}{Conjecture}
\declaretheorem[name=Theorem]{theorem}
\renewcommand{\d}{\mathrm{d}}
\newcommand{\Ch}{\mathcal{CH}}
\newcommand{\dvol}{\mathrm{dvol}}
\newcommand{\uhr}{{u_{\mathcal{H}_R}}}
\newcommand{\uchl}{{u_{\mathcal{CH}_L}}}
\newcommand{\uchr}{{u_{\mathcal{CH}_R}}}
\renewcommand{\paragraph}[1]{%
	\par %
	\addvspace{\medskipamount}%
	\textbf{\textit{#1\@addpunct{.}}}\enspace\ignorespaces
}
\numberwithin{equation}{section}
\numberwithin{prop}{section}
\numberwithin{rmk}{section}
\numberwithin{lemma}{section}
\numberwithin{definition}{section}
\title{Blowup of the local energy of linear waves  at~the~Reissner--Nordström--AdS Cauchy~horizon}
\author{Christoph Kehle\thanks{c.kehle@eth-its.ethz.ch}}
\affil{\small  
	Institute for Theoretical Studies, ETH Zürich, Clausiusstrasse 47, 8092 Zürich, Switzerland \vskip.1pc \  }
\date{August 9, 2021}
\begin{document}
	
\maketitle
\thispagestyle{empty}
\begin{abstract}
We show that generic linear perturbations $\psi$ solving $\Box_g \psi + \frac{\alpha}{l^2} \psi =0$ on Reissner--Nordström--AdS black holes have infinite local energy at the Cauchy horizon. Combined with the  result of \cite{CKehle2019} that such perturbations $\psi$ remain uniformly bounded and extend continuously across the Cauchy horizon, this settles the linear analog of the Strong Cosmic Censorship conjecture for Reissner--Nordström--AdS: the $C^0$-formulation is false but the $H^1$-formulation is true. 
\end{abstract}

\section{Introduction}
The purpose of this paper is to show blowup at the Reissner--Nordström--AdS  Cauchy horizon of the \emph{local energy} of linear perturbations $\psi$  solving the Klein--Gordon equation
\begin{align}\label{eq:waveequation}
	\Box_{g} \psi + \frac{\alpha}{l^2} \psi =0.
\end{align}
We   impose that $\psi$ arises from generic regular Cauchy data and satisfies reflecting boundary conditions at infinity $\mathcal I$. 
Here, $l^2:=\frac{3}{-\Lambda}$ is the AdS radius for a  fixed negative cosmological constant $\Lambda <0$ and $\alpha\in \mathbb R$ satisfies the Breitenlohner--Freedman  bound $\alpha < \frac 94$ \cite{breitenlohner}. The problem is motivated by the \emph{Strong Cosmic Censorship (SCC) conjecture} due to Penrose \cite{penrose1974gravitational}. Our main result proving blowup of the local energy of $\psi$ at the Cauchy horizon shows  the $H^1$-formulations of the linear analog of the Strong  Cosmic Censorship conjecture for Reissner--Nordström--AdS  (see already \cref{thm:maintheomintro}). 

In \cref{subsec:intro1} we begin by recalling the Strong Cosmic Censorship conjecture. We further present the $C^0$- and $H^1$-formulation as well as outline the state of the art for the   $\Lambda \geq 0$ cases. In \cref{subsec:thestateoftheartforlambda<0}  we   focus on the  $\Lambda <0$ case and introduce the linear analogs of the $C^0$- and $H^1$-formulation of SCC. We then discuss the status of the $C^0$-formulation before we state our main result \cref{thm:maintheomintro}.

\subsection{Smooth Cauchy horizons and Strong Cosmic Censorship}
\label{subsec:intro1}
The Reissner--Nordström--AdS black holes constitute a family of solutions to the Einstein--Maxwell system 
\begin{align}\label{eq:einstein}
R_{\mu \nu} [g] = \Lambda g_{\mu\nu} + 2 \left( F_{\mu}^\lambda F_{\lambda \nu} - \frac 14 g_{\mu \nu} F_{\lambda \kappa} F^{\lambda \kappa} \right), \;\; \; \nabla^\nu F_{\mu \nu} = 0, \;\;\; \nabla_{[\mu} F_{\nu \lambda]}
\end{align}
in the presence of a negative cosmological constant $\Lambda <0$. The Reissner--Nordström--AdS Cauchy horizon---representing the future boundary of the interior of a Reissner--Nordström--AdS black hole---has the puzzling feature that spacetime is  smoothly but non-uniquely extendible across it. This is puzzling as physical observers entering the interior of the black hole  reach and cross the Cauchy horizon in finite proper time beyond which their fate is  not determined.  In that sense, smoothness of  Cauchy horizons may be considered ``good'' for the observers themselves (as they are not destroyed!) but ``bad'' for global predictability of the theory of general relativity. 
\paragraph{The SCC conjecture}
 A way out of the puzzle of the existence of smooth Cauchy horizons was proposed by Penrose who identified a blue-shift instability \cite{Penrose:1968ar} associated to them, suggesting that smooth Cauchy horizons may be destroyed upon perturbation.  This led Penrose to the celebrated  Strong Cosmic Censorship conjecture \cite{penrose1974gravitational}  according to which  \emph{generic} asymptotically flat initial data sets for the Einstein equations \eqref{eq:einstein} give rise to   maximal spacetimes which are \emph{inextendible} as   \emph{suitably regular} Lorentizan manifolds. While the conjecture was originally only stated for asymptotically flat $\Lambda =0$ solutions to \eqref{eq:einstein}, a similar conjecture can be stated for $\Lambda \neq 0$.
 \paragraph{The most definitive formulation: the $C^0$-formulation of SCC}  Part of addressing the SCC conjecture is to specify what is meant by ``suitably regular''. The most definitive answer to the issue of Strong Cosmic Censorship would be if generically the metric $g$ can be shown to be inextendible as a continuous metric---the so-called $C^0$-formulation of Strong Cosmic Censorship. This corresponds to the scenario where physical observers reaching the boundary of spacetime are destroyed by infinite tidal deformations.

 \paragraph{Downfall of the $C^0$-formulation for $\Lambda \geq 0$} For the cases $\Lambda =0$ and $\Lambda >0$, Dafermos--Luk proved in \cite{dafermos2017interior} that perturbations of Kerr or Kerr--de~Sitter admit a Cauchy horizon beyond which the metric is continuously extendible. This falsifies the $C^0$-formulation  for the vacuum Einstein equations. Their work \cite{dafermos2017interior} was preceded by the falsification of a linear analog of the $C^0$-formulation of SCC for \eqref{eq:waveequation} in   \cite{anneboundedness,annekerr,hintzinterior} (see already \cref{conj:linearanalogofC0} for the linear analog of the $C^0$-formulation in the case $\Lambda <0$). Refer also to  \cite{dafermos_cauchy_horizon} in   spherical symmetry and the pioneering \cite{mcnamara1978behaviour}. 
 (In the presence of matter, the situation is less clear and the $C^0$-formulation may even be true, see \cite{kehle2021strong} for \eqref{eq:einstein} coupled to  a charged/massive scalar field in spherical symmetry.) 

\paragraph{The weaker formulation: the $H^1$-formulation of SCC} 
 While the $C^0$-formulation has been shown to be false for $\Lambda \geq 0$ in vacuum, the weaker $H^1$-formulation of SCC due to Christodoulou \cite{christo}  is expected to be true (at least for $\Lambda =0$): In rough terms, the $H^1$-formulation states that generically, the metric is inextendible as a continuous metric with square integrable Christoffel symbols. This can be interpreted as blowup of the local energy.  If true, the $H^1$-formulation of SCC can be considered to at least restore a weaker version of determinism.
   For further details on the state of the art of the $H^1$-formulation or its linear analog for \eqref{eq:waveequation} (see already \cref{conj:linearanalogofh1} for the linear analog of the $H^1$-formulation in the case $\Lambda <0$), we refer to \cite{daf05,dafermos2017time,luk2016kerr,luk2017strong,luk2017strong2,lukohchblowup,vandemoortel2,VandeMoortel2018}  and the earlier works \cite{MR617171,internal90,ori91} for the case  $\Lambda =0$ and to \cite{dafermos2018rough,hintzvasyinterior,MR3697197,cardosoetal,dias2018strong,dias2018strong2,MR3882684,reallsurvey} for the case $\Lambda >0$. Refer also to the discussion at the end of \cref{subsubsec:mainres}.

\subsection{The state of the art  of SCC for \texorpdfstring{$\Lambda <0$}{Lambda <0} }
\label{subsec:thestateoftheartforlambda<0}
In the case $\Lambda <0$, i.e.\ asymptotically Anti-de~Sitter black holes, the situation turns out to be radically different and has been significantly less explored. Before we address the issue of Strong Cosmic Censorship for $\Lambda <0$, we give some comments on the analysis of waves on asymptotically AdS spacetimes.

First, in view of the timelike nature of infinity $\mathcal I$ one has to additionally impose boundary conditions at   $\mathcal I$. We will consider  Dirichlet boundary conditions    but  without much change one could also  consider Neumann or suitable reflecting Robin boundary condition. 

Secondly, with such   boundary conditions, the exteriors of AdS black holes are significantly less stable compared to the $\Lambda = 0$ and $\Lambda >0$ cases. In particular, for linear perturbations $\psi$ solving \eqref{eq:waveequation},  it was  shown that Reissner--Nordström--AdS  and Kerr--AdS (for parameters satisfying the Hawking--Reall bound \cite{dold}) are weakly stable in the exterior as such linear perturbations $\psi$ only decay at a slow inverse logarithmic rate   as shown in \cite{gustav,quasimodes} (cf.\  polynomial and exponential decay of linear perturbations for $\Lambda=0$ and $\Lambda >0$).   The slow decay is associated to a stable trapping phenomenon of high-frequency null geodesics with high angular momentum which bounce back and forth between the angular momentum barrier and     $\mathcal I$. A  manifestation of this stable trapping of null geodesics is that there exist  quasinormal frequencies $\omega_{n,m,\ell} $ and corresponding quasinormal modes (QNMs) $\psi_{n,m,\ell}$ with imaginary part converging exponentially fast to the real axis. More precisely, such  QNMs are regular solutions to \eqref{eq:waveequation} of the separated form 
\begin{align}\label{eq:qnms}
\psi_{n,m,\ell}(t,r,\theta,\varphi) = e^{-i\omega_{n,m,\ell} t} R_{n,m,\ell}(r) Y_{\ell,m}(\theta) e^{i m \varphi}
\end{align} 
for $\ell\in \mathbb N$ large, $m\in \mathbb Z$, $|m|\leq \ell$, $1 \leq n \leq N(m,\ell)$, $\omega_{n,m,\ell} \in \mathbb C$, satisfying
\begin{align}\label{eq:qnmsbounds}
	\operatorname{Re}(\omega_{n,m,\ell} ) \sim \ell, \;\;\; \operatorname{Im}(\omega_{n,m,\ell}) \sim - e^{- c_I \ell}
\end{align} for Schwarzschild--AdS \cite{MR3223487}  and more generally for Reissner--Nordström--AdS. Similarly, upon  replacing the spherical harmonics in \eqref{eq:qnms} with so-called modified spheroidal harmonics, there exist analogous QNMs satisfying \eqref{eq:qnmsbounds} on   Kerr--AdS as shown in \cite{quasinormal_gannot,quasimodes}, see also \cite{warnick_quasi,petersen2021analyticity}. 
In fact, a generic infinite superposition of quasinormal modes with finite energy merely decays at a slow inverse logarithmic rate; see also the related concept of quasimodes \cite{quasimodes}.
In any case, whether this weak linear stability can be upgraded to a full nonlinear stability statement or whether already the exteriors of AdS black holes are unstable still remains widely open, see however \cite{moschidis2018proof,moschidis2017proof,moschidis2017einstein} for pure AdS.

\subsubsection{The $C^0$-formulation for $\Lambda <0$}
Turning back to the problem of Strong Cosmic Censorship and its linear analog for \eqref{eq:waveequation}, the slow decay  of linear perturbations $\psi$ on the black hole exterior can however be seen as good news as this could mean that the most definitive $C^0$-formulation may be true after all for $\Lambda <0$. One may   thus conjecture that the following linear analog of the $C^0$-formulation of Strong Cosmic Censorship holds true for subextremal asymptotically AdS black holes. 
\begin{conjecture}[Linear analog of $C^0$-formulation of SCC]
	\label{conj:linearanalogofC0}
	Linear perturbations $\psi$ solving \eqref{eq:waveequation} on Reissner--Nordström--AdS or  Kerr--AdS and arising from generic initial data with Dirichlet (or more general) boundary conditions imposed at infinity $\mathcal I$, blow up in amplitude at the Cauchy horizon $\Ch$: $\lim_{x\to \Ch} |\psi(x)|\to  \infty$.
\end{conjecture}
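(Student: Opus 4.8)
The natural strategy is to try to convert the slow inverse-logarithmic decay of $\psi$ in the exterior into pointwise growth at $\Ch$ through the blue-shift mechanism, working mode by mode and then summing. The heuristic is as follows: in the cases $\Lambda \geq 0$ the fast polynomial or exponential decay of the event-horizon data beats the blue-shift amplification at the level of the amplitude, so that $\psi$ extends continuously while only its transverse derivative blows up; for $\Lambda < 0$ the decay established in \cite{gustav,quasimodes} is so slow that one might hope the blue-shift now dominates already at the level of $\psi$ itself. The plan would then be to make this competition quantitative and to exhibit, for generic data, a surviving tail at $\Ch$ that is unbounded.

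Concretely, I would proceed as follows. First, separate variables as in \eqref{eq:qnms} and use the quasinormal spectrum \eqref{eq:qnmsbounds} to pin down the frequency content of the data $\psi|_{\Hp}$ induced on the event horizon, the resonances clustering exponentially close to the real axis, $\operatorname{Im}(\omega_{n,m,\ell}) \sim -e^{-c_I \ell}$, being responsible for the slow late-$v$ decay. Second, solve the radial ODE in the interior $r_- < r < r_+$ and extract, for each frequency, the coefficient of the singular branch $\sim (r-r_-)^{i\omega/\kappa_-}$ at the Cauchy horizon $r = r_-$, where $\kappa_-$ is the surface gravity of $\Ch$. Third, resynthesize: integrating the transport equation for $\psi$ along an outgoing ray from $\Hp$ to $\Ch$ expresses the limiting amplitude as an oscillatory superposition over the spectrum, weighted by the event-horizon data and the interior scattering coefficients. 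The conjectured blow-up would follow if this object diverges, i.e.\ if the regularity exponent governing $\psi$ at $\Ch$---controlled by the interplay of the exterior decay rate and $\kappa_-$---were negative.

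The main obstacle is exactly this last step, and I expect it to be fatal rather than merely technical. Although $\psi|_{\Hp}$ itself decays only like an inverse power of $\log v$, its transverse derivative decays like $v^{-1}(\log v)^{-2}$, which is integrable in $v$; since the amplitude at $\Ch$ is recovered precisely by integrating this derivative along outgoing rays, the integral converges and $\psi$ stays bounded. The blue-shift multiplies the derivative by $e^{\kappa_- v}$---which is why the \emph{energy} $\int |\partial_V \psi|^2 \, \d V \sim \int e^{\kappa_- v} |\partial_v \psi|^2 \, \d v$ diverges---but it does not act on the amplitude, which only ever sees the integrable $\partial_v \psi$, and the rapid oscillation $\operatorname{Re}(\omega_{n,m,\ell}) \sim \ell$ of the slowly decaying high-angular-momentum modes supplies further cancellation in the mode sum. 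Thus inverse-logarithmic decay is slow, but not slow enough to defeat the gain of a single integration, and the competition is won by boundedness. This is precisely what \cite{CKehle2019}, quoted in the abstract, establishes: $\psi$ remains uniformly bounded and extends continuously across $\Ch$. Consequently \cref{conj:linearanalogofC0} as stated is in fact \emph{false}, and no argument along these---or any---lines can prove it; the correct move is to weaken the assertion from the amplitude to the local energy, replacing $\lim_{x\to\Ch}|\psi(x)| = \infty$ by $\partial_V \psi \notin L^2$, which is the $H^1$-formulation that the paper actually establishes.
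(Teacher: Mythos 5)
You are right that this statement is a conjecture the paper does not prove but refutes for Reissner--Nordström--AdS, and your bottom line for that case agrees with the paper: boundedness from \cite{CKehle2019} makes the $C^0$-blowup claim false there, and the correct statement is the $H^1$-formulation proved in this paper. However, your blanket conclusion that ``\cref{conj:linearanalogofC0} as stated is in fact false, and no argument along these---or any---lines can prove it'' is wrong for half of the disjunction in the statement. The conjecture also covers Kerr--AdS, and by \cref{thm:baireblowup} (\cite{kehle2020diophantine}) it is \emph{true} for Baire-generic Kerr--AdS parameters: due to frequency mixing, slowly decaying high-frequency quasinormal modes can simultaneously be low frequency with respect to the Cauchy-horizon generator, $\omega_{\ell,m,n}-\omega_- m\approx 0$, and the pole of the scattering coefficient $\mathfrak R \sim (\omega-\omega_- m)^{-1}$ then produces genuine amplitude blowup, modulo a Diophantine/small-divisors condition on the black hole parameters. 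Your proposal never engages with this $m\neq 0$ frequency-mixing mechanism, so it misses the regime in which the conjecture actually holds.

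Your diagnosis of \emph{why} boundedness wins for Reissner--Nordström--AdS is also not the paper's mechanism, and as written it has a gap. You claim $\partial_v\psi$ decays along $\Hp$ like $v^{-1}(\log v)^{-2}$ and is therefore integrable along outgoing rays; no such rate is available --- the exterior decay established in \cite{gustav,quasimodes} is only inverse logarithmic, and a derivative does not automatically gain an integrable factor of $v^{-1}$, so the physical-space ``integrate the transverse derivative'' argument does not close. Moreover, the paper emphasizes that the potential amplitude instability at $\Ch$ is \emph{not} the blue-shift at all but the pole $\mathfrak R\sim \omega^{-1}$ from \cite{kehle2018scattering}; the actual proof of boundedness in \cite{CKehle2019} rests on a frequency-space decoupling: the slowly decaying part of $\psi$ is high frequency and harmless at the Cauchy horizon, while the dangerous low-frequency part decays superpolynomially on the exterior. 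So your conclusion for Reissner--Nordström--AdS is correct, but for reasons different from the ones you give, and your heuristic competition ``blue-shift versus decay'' is the wrong axis along which this problem is resolved.
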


\paragraph{\cref{conj:linearanalogofC0} is false for Reissner--Nordström--AdS}
In the work \cite{CKehle2019} it was shown that despite the slow decay on the exterior,   all linear perturbations $\psi$ solving \eqref{eq:waveequation} of Reissner--Nordström--AdS arising from sufficiently regular initial data and with Dirichlet boundary conditions imposed at infinity $\mathcal I$ remain uniformly bounded on the black hole interior and extend continuously across the Cauchy horizon $\Ch$. This shows
\begin{theorem}[\cite{CKehle2019}]
\cref{conj:linearanalogofC0} is false for Reissner--Nordström--AdS. 
\end{theorem}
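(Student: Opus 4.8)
The plan is to prove directly the statement behind the theorem: that $\psi$ stays uniformly bounded on the black hole interior and extends continuously to the Cauchy horizon $\Ch$, in flat contradiction with the blowup asserted in \cref{conj:linearanalogofC0}. First I would work in the interior region $\{r_-<r<r_+\}$ in double-null coordinates $(u,v)$, with the (right) event horizon $\mathcal H^+$ as an ingoing characteristic boundary and the (right) Cauchy horizon $\Ch$ approached as $v\to+\infty$. The boundary condition at $\mathcal I$ enters only through the data it induces on $\mathcal H^+$; the interior itself is a pure characteristic initial value problem posed on $\mathcal H^+\cup\underline C$, where $\underline C$ is an initial ingoing cone. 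Decomposing $\psi=\sum_{\ell,m}\psi_{\ell,m}(u,v)\,Y_{\ell,m}$ reduces \eqref{eq:waveequation}, after absorbing the mass term $\tfrac{\alpha}{l^2}$, to a family of $1{+}1$-dimensional equations of the schematic form
\begin{align}\label{eq:reducedwave}
\partial_u\partial_v\big(r\psi_{\ell,m}\big)+V_{\ell,m}\,\big(r\psi_{\ell,m}\big)=0,
\end{align}
whose potential $V_{\ell,m}$ is smooth in the interior and, since $|f|\sim e^{-2\kappa_\pm|r_*|}$ near the two horizons, is exponentially small—hence integrable in $r_*$—as one approaches either $\mathcal H^+$ or $\Ch$.

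The essential input is quantitative control of $\psi$ along $\mathcal H^+$, imported from the exterior scattering theory. I would record that, by the stable-trapping analysis behind \eqref{eq:qnmsbounds} (cf.\ \cite{gustav,quasimodes}), $\psi|_{\mathcal H^+}$ decays in $v$, but only at an inverse-logarithmic rate—far weaker than the polynomial ($\Lambda=0$) or exponential ($\Lambda>0$) rates. This slow decay is the genuinely new feature of the $\Lambda<0$ problem and the source of all the difficulty. With this in hand, near $\mathcal H^+$ I would apply the red-shift vector field estimate to bound the interior energy of $\psi$ in a collar of $\mathcal H^+$ by the (decaying) fluxes on $\mathcal H^+$ and on $\underline C$.

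From this foothold I would propagate towards $\Ch$ by integrating \eqref{eq:reducedwave} twice along the characteristics and running a Gr\"onwall argument in the regular coordinate $V\sim-e^{-\kappa_- v}$. Integrability of $V_{\ell,m}$ up to $\Ch$ then shows that $r\psi_{\ell,m}$ possesses a finite limit as $V\to 0$, mode by mode. The decisive structural point is an asymmetry: the amplitude $\psi$, unlike the transverse derivative $\partial_v\psi$, is not amplified by the blue-shift at $\Ch$. It is exactly this asymmetry that makes the $C^0$-statement true while—as the present paper shows—the $H^1$-statement fails.

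The hard part is to upgrade these mode-by-mode bounds to a uniform statement about $\psi=\sum_{\ell,m}\psi_{\ell,m}Y_{\ell,m}$, and this is where the slow decay bites. The only decay available is inverse-logarithmic, and it is weakest precisely for the high angular momenta, where the relevant quasinormal frequencies satisfy $\operatorname{Im}\omega_{n,m,\ell}\sim-e^{-c_I\ell}$ as in \eqref{eq:qnmsbounds}; consequently the constants in the mode-wise estimates degrade as $\ell\to\infty$. Whereas for $\Lambda\geq 0$ the polynomial or exponential decay would close the sum immediately, here I would have to commute \eqref{eq:waveequation} with the angular momentum operators and invoke elliptic estimates on the spheres, paying for the degradation with additional regularity of the data—this is precisely the role of the ``sufficiently regular'' hypothesis. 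Granting this, the series converges uniformly up to and including $\Ch$, yielding uniform boundedness and a continuous extension of $\psi$ across the Cauchy horizon. The same argument on the left Cauchy horizon, matched at the bifurcation sphere, completes the proof and falsifies \cref{conj:linearanalogofC0}.
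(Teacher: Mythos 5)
Your proposal is the standard physical-space interior scheme (red-shift estimate in a collar of $\mathcal H_R$, characteristic integration plus Gr\"onwall towards $\Ch$, then summation over spherical harmonics by commuting with angular momentum operators), i.e.\ essentially the $\Lambda=0$ argument of \cite{anneboundedness}. This is not the route taken in \cite{CKehle2019}, and the reason is that your scheme has a genuine gap at exactly the step you flag as ``the hard part''. The mode-wise Gr\"onwall argument needs an integrable transversal flux on the event horizon, schematically $\int_{\mathcal H_R}|\partial_v\psi_\ell|\,\d v<\infty$ with a constant you can sum over $\ell$. For the slowly decaying high-$\ell$ part of the solution, which is built out of quasinormal modes with $\operatorname{Re}\omega_\ell\sim\ell$ and $\operatorname{Im}\omega_\ell\sim-e^{-c_I\ell}$ as in \eqref{eq:qnmsbounds}, this flux is of size $|\omega_\ell|/|\operatorname{Im}\omega_\ell|\sim\ell\, e^{c_I\ell}$: the loss per mode is \emph{exponential} in $\ell$. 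Commuting with angular operators and using elliptic estimates on the spheres gains only polynomially in $\ell$ per commutation, so no finite Sobolev (indeed no $C^\infty$) regularity of the data can absorb this loss; only Gevrey-type angular decay would, which is far stronger than the hypothesis of the theorem. Equivalently, the inverse-logarithmic energy decay on the exterior is not integrable in $v$, so the weighted, blue-shift-compensating energy estimates that close the $\Lambda=0$ argument do not close here. Your proposal also leaves unaddressed the second obstruction the paper explicitly identifies: the pole $\mathfrak R\sim 1/\omega$ of the interior reflection coefficient, a \emph{low}-frequency effect unrelated to the blue-shift, which is precisely where your unproven assertion that ``the amplitude is not amplified at $\Ch$'' could fail.

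The actual proof in \cite{CKehle2019} (this theorem is quoted, not reproved, in the present paper) works in frequency space and exploits a decoupling that your physical-space scheme cannot see: the part of $\psi$ that decays slowly on the exterior is high frequency ($|\omega|\sim\ell$ large), and for that part the dangerous interior scattering coefficient carries a factor $1/\omega$ and is therefore small; conversely the low-frequency part, which is dangerous in the interior because of the $1/\omega$ pole of \cite{kehle2018scattering}, is shown to decay superpolynomially on the exterior. It is the product of these two smallnesses, not a Gr\"onwall argument with angular commutation, that yields uniform boundedness and continuous extension of $\psi$ across $\Ch$ and hence falsifies \cref{conj:linearanalogofC0}.
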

The crucial observation of the proof is that two sources of instability which may lead to blowup in amplitude  decouple in frequency space. More precisely, we first recall that only the high frequency part ($|\omega|, \ell$ large, $|m|\leq\ell$) of the perturbation $\psi$ decays slowly on the exterior, whereas the low frequency part is in fact shown to decay superpolynomially \cite{CKehle2019}. On the other hand, in the interior, it is only the  low-frequency (measured with respect to the null generator of the Cauchy horizon) part of the perturbation $\psi$ which may lead to blowup in amplitude at the Cauchy horizon. Remarkably, the latter instability of the low frequency part of $\psi$ in the interior is not a result of the celebrated blue-shift effect but   a consequence of a pole of the scattering coefficient $\mathfrak R \sim \frac{1}{\omega}$ from the event to the Cauchy horizon as observed in \cite{kehle2018scattering}.
\paragraph{ \cref{conj:linearanalogofC0} for Kerr--AdS and its connection to Diophantine approximation}
The situation for Kerr--AdS has yet another level complexity compared to the Reissner--Nordström--AdS case. Indeed, due to the rotation of the Kerr--AdS black hole, a frequency mixing phenomenon occurs and slowly decaying high frequency solutions $\psi$ on the exterior ($|\omega|, \ell$ large, $|m|\leq \ell$) can at the same time be low frequency ($\omega - \omega_- m$ small) with respect to the generator $K_- = T + \omega_- \Phi$  of the Cauchy horizon. 
In particular, the slowly decaying high frequency  quasinormal modes with $(\omega_{\ell,m,n}, \ell, m)$  can satisfy $\omega_{\ell,m,n} - \omega_- m \approx 0$. This yields a small divisors problem as we recall that the scattering coefficient is of the form $\mathfrak R \sim \frac{1}{\omega - \omega_- m } = \frac{1}{\omega_{\ell,m,n} - \omega_- m}$. Reminiscent of small divisors problem are certain Diophantine condition which  turn out (see  \cite{kehle2020diophantine}) to play a crucial role in addressing \cref{conj:linearanalogofC0} for Kerr--AdS. To state the following theorem we denote with $\mathcal P$ the parameter space of dimensionless masses and angular momenta $(\mathfrak m, \mathfrak a) := (M \sqrt{-\Lambda}, a \sqrt{-\Lambda}  ) $ corresponding to all  subextremal Kerr--AdS black holes satisfying the Hawking--Reall bound. 
 \begin{theorem}[\cite{kehle2020diophantine}]\label{thm:baireblowup}
\cref{conj:linearanalogofC0} is true for Baire-generic Kerr--AdS black holes.  

More precisely,  linear perturbations $\psi$ solving \eqref{eq:waveequation} with $\alpha = 2$ on Kerr--AdS   blow up in amplitude at the Cauchy horizon $\lim_{x\to \Ch}|\psi(x)|\to \infty$ for dimensionless Kerr--AdS black hole parameters $(\mathfrak m, \mathfrak a) := (M \sqrt{-\Lambda}, a \sqrt{-\Lambda}  )  \in \mathcal P_{\textup{Blowup}} \subset \mathcal P$, where  
\begin{itemize}
\item $ \mathcal P_{\textup{Blowup}} \subset \mathcal P$ is dense,
\item $ \mathcal P_{\textup{Blowup}} \subset \mathcal P$ is Baire-generic (contains a countable intersection of open and dense sets),
\item $ \mathcal P_{\textup{Blowup}}\subset \mathcal P$ is a Lebesgue null set.
\end{itemize}
 \end{theorem}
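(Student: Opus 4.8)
The plan is to reduce blowup at the Cauchy horizon to a small divisors problem and then to analyze the resulting Diophantine condition on the parameters $(\mathfrak m, \mathfrak a)$. First I would set up the interior scattering theory for \eqref{eq:waveequation} on Kerr--AdS, expressing the trace of $\psi$ on a neighborhood of the Cauchy horizon $\Ch$ in terms of the data induced on the event horizon. After separating variables and passing to the frequency representation with respect to the stationary field $T$ and the azimuthal field $\Phi$, the amplitude of $\psi$ at $\Ch$ is governed by the scattering coefficient $\mathfrak R = \mathfrak R(\omega, m, \ell)$ from the event to the Cauchy horizon, which carries the pole $\mathfrak R \sim (\omega - \omega_- m)^{-1}$ identified in \cite{kehle2018scattering}. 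The key point is that the slowly-decaying quasinormal modes $\psi_{n,m,\ell}$ of \eqref{eq:qnms}--\eqref{eq:qnmsbounds}, which dominate the late-time behaviour on the exterior, have $\operatorname{Re}(\omega_{n,m,\ell}) \sim \ell$ and can therefore sit arbitrarily close to the pole precisely when $\operatorname{Re}(\omega_{n,m,\ell}) - \omega_- m \approx 0$.

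Second, I would establish a quantitative blowup criterion. Superposing the modes $\psi_{n,m,\ell}$ with finite-energy coefficients, the contribution of each mode to the amplitude at $\Ch$ is amplified by $|\mathfrak R| \sim |\omega_{n,m,\ell} - \omega_- m|^{-1}$; since $\operatorname{Im}(\omega_{n,m,\ell}) \sim -e^{-c_I \ell}$, one has the lower bound $|\omega_{n,m,\ell} - \omega_- m| \gtrsim e^{-c_I \ell}$, so the amplification is maximal exactly when the real part is matched to within this exponentially small scale. I would show that $\lim_{x \to \Ch} |\psi(x)| = \infty$ for generic data (those charging the resonant modes) provided the \emph{resonance condition}
\[ |\operatorname{Re}(\omega_{n,m,\ell}) - \omega_- m| \lesssim e^{-c_I \ell} \]
holds for infinitely many triples $(n,m,\ell)$. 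This reduces blowup to an approximation problem: $\omega_-$ must be approximated by the quantities $\operatorname{Re}(\omega_{n,m,\ell})/m$ to an exponential (Liouville-type) accuracy infinitely often.

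Third, I would analyze the parameter set $\mathcal P_{\textup{Blowup}} \subset \mathcal P$ cut out by this condition. Writing $A_{n,m,\ell} := \{ (\mathfrak m, \mathfrak a) : |\operatorname{Re}(\omega_{n,m,\ell}) - \omega_- m| < \epsilon(\ell) \}$ with window $\epsilon(\ell) \sim e^{-c_I \ell}$, the blowup set is the limit superior $\mathcal P_{\textup{Blowup}} = \bigcap_k U_k$, where $U_k := \bigcup_{\ell \geq k} \bigcup_{n,m} A_{n,m,\ell}$. Each $U_k$ is open by the continuous (indeed real-analytic) dependence of $\omega_{n,m,\ell}$ and $\omega_-$ on $(\mathfrak m, \mathfrak a)$, so $\mathcal P_{\textup{Blowup}}$ is a $G_\delta$ set. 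Baire-genericity follows once each $U_k$ is shown dense, which reduces to exhibiting, near any admissible parameter, arbitrarily small perturbations driving some $\operatorname{Re}(\omega_{n,m,\ell})$ to within $\epsilon(\ell)$ of $\omega_- m$---here one exploits that $\operatorname{Re}(\omega_{n,m,\ell}) \sim \ell$ sweeps out an unboundedly fine net as $\ell, m$ range while $\omega_- m$ varies continuously. Finally, the Lebesgue null statement follows from Borel--Cantelli: each $A_{n,m,\ell}$ has measure $\lesssim \epsilon(\ell)$, and as there are only $O(\ell^2)$ admissible pairs $(n,m)$ for fixed $\ell$, the total measure obeys $\sum_\ell \ell^2 \epsilon(\ell) \lesssim \sum_\ell \ell^2 e^{-c_I \ell} < \infty$; hence $|\mathcal P_{\textup{Blowup}}| = 0$, while density is untouched.

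The main obstacle will be the quantitative blowup criterion of the second step: one must control the full superposition uniformly, rule out destructive interference between modes, and show that the single exponentially amplified resonant mode genuinely dominates the (bounded, by \cite{CKehle2019}) contribution of all non-resonant modes as $x \to \Ch$. This requires sharp asymptotics for both the frequencies $\omega_{n,m,\ell}$ and for $\mathfrak R$ near its pole, together with uniform-in-frequency control of the remaining transmission data; it is precisely the matching of the approximation exponent in the resonance condition to the exponent $c_I$ governing $\operatorname{Im}(\omega_{n,m,\ell})$ that forces the Liouville-type---and hence simultaneously dense, $G_\delta$, and null---structure of $\mathcal P_{\textup{Blowup}}$.
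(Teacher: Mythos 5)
First, a point of order: the paper does not prove \cref{thm:baireblowup} at all --- it is quoted from \cite{kehle2020diophantine} as background, and the only in-paper material is the heuristic paragraph about frequency mixing and the pole $\mathfrak R \sim (\omega-\omega_- m)^{-1}$. Your skeleton (interior scattering, resonance between slowly decaying exterior quasinormal frequencies and the pole at $\omega=\omega_- m$, a $\limsup$ parameter set that is dense $G_\delta$ but Lebesgue null by Borel--Cantelli) does match that heuristic and the broad strategy of the cited work, so as a reconstruction of the intended mechanism it is on target. But judged as a proof it has two genuine gaps, only one of which you flag.

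The gap you do flag is the real one, and it is worth being precise about why it is hard. Blowup of $\lim_{x\to\Ch}|\psi(x)|$ cannot come from a single amplified mode: each fixed mode solution is individually bounded at $\Ch$ (this is exactly the content of the boundedness mechanism in \cite{CKehle2019}), so the divergence must come from the infinite sum over $(m,\ell)$ of transmitted amplitudes, each bounded below by $|\mathfrak R(\omega_- m,m,\ell)|$ times a lower bound on the Fourier transform of $\psi$ restricted to the event horizon \emph{at the exact frequency} $\omega=\omega_- m$. Producing that lower bound is not the same as knowing a QNM frequency lies within $e^{-c_I\ell}$ of $\omega_- m$; one needs quantitative resolvent/quasimode estimates showing the horizon data genuinely charges the resonant frequency, and then one must rule out cancellation when summing infinitely many such contributions. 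Your phrase ``the single resonant mode dominates the bounded contribution of all non-resonant modes'' misstates the structure: it is the divergence of the series of resonant contributions, not domination by one term, that yields $|\psi|\to\infty$. The second, unflagged gap is the density of each $U_k$. This is the actual Diophantine content of the theorem and does not follow from ``$\operatorname{Re}(\omega_{n,m,\ell})\sim\ell$ sweeps out a fine net'': you need quantitative asymptotics (Bohr--Sommerfeld/Weyl-type) for the real parts of the quasinormal frequencies as functions of $(\mathfrak m,\mathfrak a)$, together with a transversality statement guaranteeing that an arbitrarily small parameter perturbation moves $\omega_- m$ across the approximant set at a controlled rate. Without this, openness-and-density of $U_k$ --- and hence Baire genericity --- is unproven. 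The Borel--Cantelli half of your argument is fine modulo the measure estimate $|A_{n,m,\ell}|\lesssim\epsilon(\ell)$, which again requires a nondegeneracy (nonvanishing derivative in the parameters) of the map whose zero set defines the resonance.
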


On the other hand the proof of \cref{thm:baireblowup} in \cite{kehle2020diophantine} and the heuristics outlined in \cite{kehle2020diophantine}   strongly suggest   that  for   the complement of $\mathcal P_{\textup{Blowup}}$, a set which is Lebesgue-generic (but Baire-exceptional), the amplitude of $\psi$ remains uniformly bounded at the Cauchy horizon.\begin{conjecture}[\cite{kehle2020diophantine}]
	\cref{conj:linearanalogofC0} is false for Lebesgue-generic Kerr--AdS black holes.  
	
	More precisely,  linear perturbations $\psi$ solving \eqref{eq:waveequation} with $\alpha = 2$ on Kerr--AdS   remain uniformly bounded $ |\psi(x)|\leq C $ and extend continuously across the Cauchy horizon for dimensionless Kerr--AdS black hole parameters $(\mathfrak m, \mathfrak a) := (M \sqrt{-\Lambda}, a \sqrt{-\Lambda}  )  \in \mathcal P_{\textup{Bounded}} \subset \mathcal P$, where  
	\begin{itemize}
		\item $ \mathcal P_{\textup{Bounded}} \subset \mathcal P$ is dense,
		\item $ \mathcal P_{\textup{Bounded}} \subset \mathcal P$ is Lebesgue-generic (full Lebesgue measure),
		\item $ \mathcal P_{\textup{Bounded}} \subset \mathcal P$ is Baire-exceptional.
	\end{itemize}
\end{conjecture}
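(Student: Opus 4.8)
The plan is to mirror the frequency-space strategy behind \cref{thm:baireblowup} but to exploit the \emph{opposite} arithmetic scenario. As in \cite{kehle2020diophantine}, I would first decompose $\psi$ into modes adapted to the modified spheroidal harmonics, reducing the problem to a family of radial ODEs parametrised by $(\omega, m, \ell)$. The behaviour of $\psi$ at $\Ch$ is then encoded, mode by mode, in the scattering coefficients from the event to the Cauchy horizon, with the amplitude governed by the reflection coefficient $\mathfrak R \sim (\omega - \omega_- m)^{-1}$. The essential point is that the $C^0$-instability is driven entirely by the pole of $\mathfrak R$ at the Cauchy-horizon frequency $\omega = \omega_- m$, so that boundedness should follow once one shows that, for $(\mathfrak m, \mathfrak a) \in \mathcal P_{\textup{Bounded}}$, the relevant frequencies of $\psi$ stay quantitatively away from this pole.

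Second, I would make precise the interplay between the exterior dynamics and the interior pole. Because only the high-frequency part of $\psi$ (with $|\omega|, \ell$ large) decays slowly on the exterior---concentrated near the quasinormal frequencies $\omega_{\ell,m,n}$ with $\operatorname{Re}(\omega_{\ell,m,n}) \sim \ell$ and $\operatorname{Im}(\omega_{\ell,m,n}) \sim -e^{-c_I \ell}$---the dangerous contributions at $\Ch$ arise precisely from modes for which the frequency mixing $\omega_{\ell,m,n} - \omega_- m$ is small. The key quantitative estimate I would aim to establish is a Diophantine lower bound of the form $|\operatorname{Re}(\omega_{\ell,m,n}) - \omega_- m| \gtrsim \ell^{-\tau}$ for a suitable exponent $\tau$, valid for all large $\ell$, $|m| \leq \ell$, $1 \leq n \leq N(m,\ell)$. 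The underlying dichotomy is a competition of rates: blowup (as in \cref{thm:baireblowup}) requires the resonance $\omega_{\ell,m,n} - \omega_- m$ to become \emph{super-polynomially} small infinitely often, overwhelming the super-polynomial decay of the mode coefficients of generic smooth data, whereas a merely polynomial lower bound as above is \emph{beaten} by that decay. Combined with this control on the data, such a bound should let one sum the amplified contributions $\mathfrak R \cdot (\text{data})$ over all modes and conclude uniform boundedness---and, following \cite{CKehle2019}, continuous extendibility---across $\Ch$.

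Third, I would carry out the measure-theoretic and Baire-category bookkeeping. Writing the resonance set $\mathcal P_{\textup{Blowup}}$ as a $\limsup$ over $(\ell,m,n)$ of the parameter regions where $|\operatorname{Re}(\omega_{\ell,m,n}) - \omega_- m|$ is anomalously small, a Borel--Cantelli argument should show that for any Diophantine exponent the total measure of these regions is summable, so that $\mathcal P_{\textup{Blowup}}$ is Lebesgue-null and hence $\mathcal P_{\textup{Bounded}}$ has full measure. At the same time, these small-resonance regions are open and can be made to approach any fixed parameter by taking $\ell$ large, so their $\limsup$ contains a dense $G_\delta$; this renders $\mathcal P_{\textup{Blowup}}$ Baire-generic and $\mathcal P_{\textup{Bounded}}$ Baire-exceptional, exactly as in the classical Liouville-versus-Diophantine dichotomy for $\omega_- m$ relative to the quasinormal spectrum.

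The main obstacle is the Diophantine lower bound on $\operatorname{Re}(\omega_{\ell,m,n}) - \omega_- m$ and the associated measure estimate. This requires (i) sufficiently sharp and uniform asymptotics for the quasinormal frequencies $\omega_{\ell,m,n}$ together with their explicit \emph{dependence} on the black-hole parameters $(\mathfrak m, \mathfrak a)$, and (ii) a nondegeneracy (transversality) property ensuring that, as the parameters vary, the quantities $\operatorname{Re}(\omega_{\ell,m,n}) - \omega_- m$ sweep through a neighbourhood of $0$ with controlled, nonvanishing speed---without which the Borel--Cantelli measure bound cannot be closed. Establishing this parameter-dependence rigorously and uniformly in the triple $(\ell,m,n)$ is precisely the delicate small-divisors input that, at present, keeps this statement at the level of a conjecture.
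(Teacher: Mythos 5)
The statement you are trying to prove is stated in the paper as a \emph{conjecture}, attributed to \cite{kehle2020diophantine}; the paper offers no proof of it, only the remark that the proof of \cref{thm:baireblowup} and the heuristics of \cite{kehle2020diophantine} ``strongly suggest'' it. Your proposal is essentially a restatement of those heuristics, and, as you yourself concede in your final paragraph, it does not close the argument. So the honest verdict is: there is no proof in the paper to compare against, and your proposal is not a proof either. Concretely, the missing ingredients are the ones you name --- uniform parameter-dependent asymptotics for $\omega_{\ell,m,n}(\mathfrak m,\mathfrak a)$ and a transversality/nondegeneracy statement in the parameters --- but also at least one you do not name: your argument is phrased entirely at the level of individual quasinormal modes and the discrete resonances $\operatorname{Re}(\omega_{\ell,m,n})-\omega_- m$, whereas a boundedness statement for solutions arising from generic Cauchy data requires controlling the full Fourier transform of $\psi$ over the \emph{continuum} of real frequencies $\omega$, weighting the pole $\mathfrak R\sim(\omega-\omega_- m)^{-1}$ against quantitative decay of $\hat\psi$ along the event horizon near $\omega=\omega_- m$, and then summing over all $(m,\ell)$ --- this is the structure of the boundedness proof in \cite{CKehle2019} for Reissner--Nordström--AdS, and reproducing it for Kerr--AdS under a Diophantine hypothesis is precisely the open part.

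One further correction to your third step: the measure-theoretic and Baire-category ``bookkeeping'' for $\mathcal P_{\textup{Bounded}}$ is not something you need to re-derive via Borel--Cantelli. Once $\mathcal P_{\textup{Bounded}}$ is taken to be (a subset of) the complement of $\mathcal P_{\textup{Blowup}}$, the properties that it is dense, of full Lebesgue measure, and Baire-exceptional follow immediately from the three properties of $\mathcal P_{\textup{Blowup}}$ already established in \cref{thm:baireblowup} (a Lebesgue-null set has a dense, full-measure complement; a Baire-generic set has a meager complement). The entire content of the conjecture is the analytic assertion that on this complement the solution actually remains uniformly bounded and extends continuously across $\Ch$ --- and that is the step your proposal, like the literature, leaves open.
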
 
Thus, for Kerr--AdS the linear analog of the $C^0$-formulation of Strong Cosmic Censorship is true if Baire-genericity is imposed and is conjectured to be false if Lebesgue-genericity is imposed.

\subsubsection{Main result:  Linear analog of the $H^1$-formulation of SCC   is true for Reissner--Nordström--AdS}
\label{subsubsec:mainres}
While the above shows that   for Reissner--Nordström--AdS (and conjecturally for Lebesgue-generic Kerr--AdS black holes), the linear analog of the $C^0$-formulation of Strong Cosmic Censorship is false, it still remains open whether at least the linear analog of the weaker  $H^1$-formulation of Strong Cosmic Censorship is true.

\begin{conjecture}[Linear analog of $H^1$-formulation of SCC]
	\label{conj:linearanalogofh1}
	Linear perturbations $\psi$ solving \eqref{eq:waveequation} on Reissner--Nordström--AdS or more generally Kerr--AdS and arising from generic Cauchy data with Dirichlet  boundary conditions imposed at infinity $\mathcal I$  have infinite local energy along hypersurfaces intersecting transversally the Cauchy horizon $\Ch$, i.e.\ $\psi$ fails to be in $H^1_{\mathrm{loc}}$ around any point on $\Ch$. 
\end{conjecture}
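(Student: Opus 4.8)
The plan is to prove the statement for Reissner–Nordström–AdS by reducing the failure of $H^1_{\mathrm{loc}}$ to a sharp decay rate for a transversal derivative of $\psi$ along the Cauchy horizon, and then to show that generic data violate this rate. First I would fix a point $p$ on the right component $\mathcal{CH}^{R}$ of the Cauchy horizon and work in null coordinates $(u,v)$ in which $\mathcal{CH}^{R}$ is reached as $v\to\infty$ at fixed $u$. Writing $\kappa_-$ for the surface gravity of $\mathcal{CH}$ and $V=-\kappa_-^{-1}e^{-\kappa_- v}$ for the Kruskal-type coordinate regular across $\mathcal{CH}^{R}$, one has $\partial_V=e^{\kappa_- v}\partial_v$, so that membership of $\psi$ in $H^1_{\mathrm{loc}}$ near $p$ forces, along a fixed ingoing slice, the weighted bound
\begin{align}\label{eq:plan-weight}
	\int^{\infty} e^{\kappa_- v}\,\bigl\lvert \partial_v(r\psi)(u,v)\bigr\rvert^2\,\mathrm{d}v<\infty .
\end{align}
It therefore suffices to exhibit, for generic data, a transversal derivative that fails to decay faster than $e^{-\kappa_- v/2}$ as $v\to\infty$. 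By orthogonality of the spherical harmonics it is enough to violate \eqref{eq:plan-weight} in a single angular sector $(\ell,m)$, and the same argument with $u\leftrightarrow v$ treats $\mathcal{CH}^{L}$, so that all of $\Ch$ is covered.

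Next I would represent this transversal derivative through the interior scattering theory. Separating variables and Fourier transforming the event-horizon data $\psi|_{\mathcal H}$ in the horizon-regular coordinate, the amplitude of the transversal mode $e^{-i\omega v}$ at $\mathcal{CH}^{R}$ is, schematically, $\mathfrak a(\omega,\ell)=\mathfrak T(\omega,\ell)\,\widehat{\psi}_{\mathcal H}(\omega,\ell,m)$, where $\mathfrak T$ is the relevant entry of the interior scattering matrix of the radial ODE analysed in \cite{kehle2018scattering}. Since the interior potential decays exponentially towards both horizons, $\mathfrak T$ and $\mathfrak R$ are holomorphic in a strip about the real $\omega$-axis, and the transversal derivative is governed by $\omega\,\mathfrak a(\omega,\ell)$. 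A Paley–Wiener argument then shows that \eqref{eq:plan-weight} holds if and only if $\omega\mapsto\omega\,\mathfrak a(\omega,\ell)$ extends holomorphically, with Hardy-space control, to the strip $\operatorname{Im}\omega>-\tfrac{\kappa_-}{2}$; equivalently, any singularity of $\mathfrak a(\cdot,\ell)$ with $\operatorname{Im}\omega_0\in(-\tfrac{\kappa_-}{2},0)$ and nonzero residue contributes a term $\sim e^{-i\omega_0 v}$ to $\partial_v(r\psi)$ decaying only like $e^{-\lvert\operatorname{Im}\omega_0\rvert v}$, whence $\int^\infty e^{(\kappa_--2\lvert\operatorname{Im}\omega_0\rvert)v}\,\mathrm{d}v=+\infty$ and \eqref{eq:plan-weight} fails.

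The decisive input is the location of these singularities. The Fourier transform $\widehat{\psi}_{\mathcal H}(\cdot,\ell,m)$ of the horizon data has poles at the quasinormal frequencies $\omega_{n,m,\ell}$, and by \eqref{eq:qnmsbounds} these satisfy $\operatorname{Im}(\omega_{n,m,\ell})\sim -e^{-c_I\ell}\to 0$; hence for every sufficiently large $\ell$ the entire family $\{\omega_{n,m,\ell}\}_n$ lies inside the strip $(-\tfrac{\kappa_-}{2},0)$. Fixing such an $\ell$, the least-damped frequency $\omega_{1,m,\ell}$ governs the leading large-$v$ asymptotics of the transversal derivative, $\partial_v(r\psi)(u,v)\sim c(u)\,e^{-i\omega_{1,m,\ell} v}$, and using the analyticity and non-vanishing of $\mathfrak T$ from \cite{kehle2018scattering} I would check that this pole is transmitted with nonzero coefficient $c(u)$. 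Crucially, each such quasinormal contribution has bounded (indeed summable) amplitude—consistent with the continuous extension established in \cite{CKehle2019}—yet its weighted energy \eqref{eq:plan-weight} is already infinite no matter how small the coefficient; this is exactly the mechanism separating the $C^0$- and $H^1$-formulations. I emphasise that, unlike the low-frequency $\mathfrak R\sim\tfrac1\omega$ effect (which for Reissner–Nordström–AdS is suppressed by the superpolynomial decay of the low frequencies and in any case contributes no large-$v$ tail to $\partial_v\psi$), the $H^1$ blowup is driven by the near-real high-$\ell$ quasinormal spectrum.

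Finally I would organise the genericity and the ``around any point'' assertion, and single out the main obstacle. The coefficient of each quasinormal mode in the horizon data is a bounded linear functional of the Cauchy data, so finiteness of \eqref{eq:plan-weight} forces the vanishing of all such functionals over the infinitely many $(n,m,\ell)$ with $\operatorname{Im}\omega_{n,m,\ell}\in(-\tfrac{\kappa_-}{2},0)$; this confines the ``good'' data to a subspace of infinite codimension, which is both meager and of measure zero, so generic data blow up. Because the surviving coefficient $c(u)$ is a nonzero real-analytic function along the generators, it vanishes at most at isolated points, so $H^1_{\mathrm{loc}}$ fails in every neighbourhood of every point of $\mathcal{CH}^{R}$, and symmetrically on $\mathcal{CH}^{L}$. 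The hard part will be making the third step quantitative: turning the mode-by-mode heuristic into a rigorous \emph{lower} bound on \eqref{eq:plan-weight} for the genuine physical-space solution. This demands uniform-in-$\ell$ control of the interior transmission coefficients together with their non-vanishing at the complex quasinormal frequencies, and a careful argument—via orthogonality of the angular sectors and the finiteness of the weighted energy of the strip-holomorphic ``background''—that the rapidly convergent good part of the spectrum cannot conspire to cancel the divergent contribution of the near-real quasinormal poles, all while retaining enough summability in $\ell$ to make sense of the solution and its continuous extension.
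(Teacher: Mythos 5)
You have correctly identified the mechanism the paper exploits---quasinormal frequencies $\omega_\ell$ with $0<-\operatorname{Im}(\omega_\ell)\lesssim e^{-c_I\ell}$ entering the strip $-\operatorname{Im}(\omega)/\kappa_-<\tfrac12$ (the paper's condition \eqref{eq:boundbeta}), a non-vanishing interior transmission coefficient, and genericity in the sense of infinite codimension---but the route you propose has a genuine gap. Your second and third steps rest on a resonance expansion: you assume that for a \emph{generic} solution the Fourier transform of the event-horizon data continues meromorphically with poles precisely at the quasinormal frequencies, with residues that are bounded linear functionals of the Cauchy data, and that the least-damped pole governs the large-$v$ asymptotics of $\partial_v(r\psi)$. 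No such expansion is available for asymptotically AdS black holes: because the $\omega_{n,m,\ell}$ accumulate at the real axis there is no spectral gap, the exterior decay is only inverse-logarithmic, and the contour deformation behind your Paley--Wiener equivalence cannot be pushed past the poles with a controllable remainder. Your closing paragraph flags this as ``the hard part,'' but it is not merely quantitative bookkeeping---it is the missing theorem on which the whole argument hangs. A second, smaller misstep: you cite \cite{kehle2018scattering} for the analyticity and non-vanishing of $\mathfrak T$, but that work concerns real frequencies; the lower bound $|\mathfrak T(\omega_\ell,\ell)|\gtrsim 1$ \emph{at the complex quasinormal frequencies}, uniformly for large $\ell$, is precisely the main technical content of the present paper (\cref{eq:transmissioncoefnotzero}), proved by a three-region WKB analysis on the scales $R_{1,2}=\pm 1/\operatorname{Im}(\omega_\ell)$.

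The paper sidesteps the resonance-expansion problem entirely: rather than extracting QNM contributions from a generic solution, it takes the solution to \emph{be} a single quasinormal mode. Since $R_\ell$ is regular at $r=r_+$, the mode $e^{-i\omega_\ell v}e^{i\omega_\ell f_\ast(r)}\tilde R_{\mathcal H_R}(\omega_\ell,\ell,r)Y_{\ell,0}(\theta)$ extends analytically across $\mathcal H_R$, and its restriction to $\Sigma$ furnishes admissible smooth Cauchy data (\cref{prop:qnmextendssmoothly}). One then reads off the behaviour at $\mathcal{CH}_R$ directly from the exact decomposition into $R_{\mathcal{CH}_L}$ and $R_{\mathcal{CH}_R}$: the transmitted piece carries the factor $(r-r_-)^{-i\omega_\ell/\kappa_-}$, which fails to be in $H^1$ once $\beta<\tfrac12$, in agreement with your weighted-energy criterion. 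Genericity then follows because the span of these data over $\ell\geq\tilde\ell$ is infinite dimensional and, by orthogonality of the $Y_{\ell,0}$, every nonzero element blows up---no excitation-coefficient functionals are needed. If you replace your Fourier-analytic steps by this single-mode construction, the remainder of your outline aligns with the paper; note, though, that the paper only establishes blowup along hypersurfaces terminating on $\mathcal{CH}_R$, whereas your claim of coverage of every point of $\Ch$ via real-analyticity of a residue coefficient again presupposes the unproven expansion.
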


As our main result in this paper, we answer this question in the affirmative for Reissner--Nordström--AdS. The following theorem is a direct consequence of the more precise \cref{thm:maintheorem} in \cref{sec:precisestatement}. 
 
\begin{theorem}  \label{thm:maintheomintro}
	\cref{conj:linearanalogofh1} is true for Reissner--Nordström--AdS. 
\end{theorem}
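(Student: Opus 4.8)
The plan is to reduce the claimed failure of $\psi \in H^1_{\mathrm{loc}}$ near $\Ch$ to the divergence of a single blue-shift-weighted $L^2$ norm of the transverse derivative of $\psi$ along $\Ch$, and then to read off this divergence from the location of the exterior quasinormal frequencies \eqref{eq:qnmsbounds}.

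First I would pass to interior null coordinates $(u,v)$ in which (the right component of) the Cauchy horizon $\Ch$ sits at $v\to\infty$, and introduce the coordinate $V=-e^{-\kappa_- v}$ that is regular across $\Ch$, where $\kappa_->0$ is the surface gravity of the inner horizon. Fixing a spacelike hypersurface $\Sigma$ crossing $\Ch$ transversally, the dangerous contribution to the local energy is $\int |\partial_V\psi|^2$, and since $\partial_V = \kappa_-^{-1} e^{\kappa_- v}\partial_v$ this equals, up to regular terms, $\kappa_-^{-1}\int_{v_0}^\infty e^{\kappa_- v}\,|\partial_v\psi(u,v)|^2\,dv$ at fixed $u$. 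Thus it suffices to show, for generic data, that this weighted integral diverges for (a.e.) $u$. Decomposing $\psi=\sum_{\ell,m}\psi_{\ell,m}Y_{\ell,m}e^{im\varphi}$ and using that the angular pieces are orthogonal (so that the energies add and it is enough to produce one divergent mode), I would analyze a single fixed mode $\psi_{\ell,m}$ through its representation as an $\omega$-integral. Near $\Ch$ the radial profile is asymptotically $R_{\omega,\ell,m}(r)\sim \mathcal A(\omega)e^{i\omega r_\ast}+\mathcal B(\omega)e^{-i\omega r_\ast}$, so that $\psi_{\ell,m}(u,v)\sim \Psi_{\mathcal A}(u)+\Psi_{\mathcal B}(v)$ with $\Psi_{\mathcal B}(v)=\int \mathcal B(\omega)e^{-i\omega v}\,d\omega$; only the $v$-dependent piece $\Psi_{\mathcal B}$ feeds the transverse derivative.

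The heart of the argument is to locate the singularities of $\mathcal B(\omega)$ in the lower half-plane. Using the interior scattering theory from \cite{kehle2018scattering} together with the exterior propagator, I would express $\mathcal B(\omega)$ as the transmission coefficient $\mathfrak T(\omega)$ from $\Hp$ to $\Ch$ times the Fourier transform of the event-horizon data, the latter inheriting poles at the exterior quasinormal frequencies $\omega_{n,m,\ell}$. Shifting the contour down to $\{\operatorname{Im}\omega=-\kappa_-/2\}$ and collecting residues then yields, for large $v$,
\[
  \partial_v\psi_{\ell,m}\big|_{\Ch}\ \sim\ -i\,\omega_{n,m,\ell}\,c_{n,m,\ell}\,e^{-i\omega_{n,m,\ell} v}+O\big(e^{-\kappa_- v/2}\big),
\]
where $c_{n,m,\ell}$ is proportional to $\mathfrak T(\omega_{n,m,\ell})$ and to the quasinormal excitation coefficient of the data. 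Since $|\operatorname{Im}\omega_{n,m,\ell}|\sim e^{-c_I\ell}$ by \eqref{eq:qnmsbounds}, for $\ell$ large enough this decay rate is smaller than $\kappa_-/2$, whence
\[
  \int_{v_0}^\infty e^{\kappa_- v}\,|\partial_v\psi_{\ell,m}|^2\,dv\ \gtrsim\ |c_{n,m,\ell}|^2\!\int_{v_0}^\infty e^{(\kappa_- - 2|\operatorname{Im}\omega_{n,m,\ell}|)\,v}\,dv\ =\ +\infty .
\]
Note that the unweighted integral $\int |\partial_v\psi_{\ell,m}|^2\,dv$ converges, consistent with the continuous (indeed bounded) extension of $\psi$ across $\Ch$ established in \cite{CKehle2019}: it is precisely the blue-shift weight $e^{\kappa_- v}$, i.e.\ the transverse derivative in the regular coordinate, that is not integrable.

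It remains to ensure the excitation coefficient $c_{n,m,\ell}$ is nonzero. I would fix one admissible $(\ell,m,n)$ with $\omega_{n,m,\ell}$ in the strip and $\mathfrak T(\omega_{n,m,\ell})\neq 0$; then $c_{n,m,\ell}$ is a nonzero continuous linear functional of the Cauchy data, so it vanishes only on a closed hyperplane, whose complement is open and dense, giving the genericity. Since $\Psi_{\mathcal B}$ is independent of $u$ at leading order, the blowup occurs around every point of $\Ch$. The main obstacle is the frequency-space input feeding this scheme: rigorously identifying $\mathcal B(\omega)$ and its quasinormal poles (in particular the nonvanishing of the $\Hp\to\Ch$ transmission at $\omega_{n,m,\ell}$), justifying the contour deformation in the presence of coefficients that do not decay along the real axis, and controlling the error terms and the remaining mode sum so that the single divergent mode is not cancelled. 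Establishing the requisite quasinormal-mode existence and excitation estimates on Reissner--Nordström--AdS, sharpening those of \cite{MR3223487, gustav, quasimodes}, is where the real work lies.
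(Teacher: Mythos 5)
You have correctly identified the physical mechanism the paper uses---quasinormal modes with $\operatorname{Im}(\omega_\ell)\to 0$ exponentially fast, the threshold condition $-\operatorname{Im}(\omega_\ell)/\kappa_-<\tfrac12$ of \eqref{eq:boundbeta}, and the need for the event-horizon-to-Cauchy-horizon transmission coefficient to be nonzero at the quasinormal frequency---but your proposal has two genuine gaps. The first and most serious is that you simply posit ``fix one admissible $(\ell,m,n)$ with $\mathfrak T(\omega_{n,m,\ell})\neq 0$.'' This nonvanishing is precisely the main technical content of the paper (\cref{eq:transmissioncoefnotzero}): it is proved by a quantitative lower bound $|\mathfrak W[\uhr,\uchr](\omega_\ell,\ell)|\sim\ell$, obtained by splitting the interior into three regions at $R_{1,2}=\pm 1/\operatorname{Im}(\omega_\ell)$, using Volterra estimates near the horizons and a WKB approximation with complex potential in the middle region, all of which depends delicately on the interplay between $\operatorname{Re}(\omega_\ell)\sim\ell$ and $|\operatorname{Im}(\omega_\ell)|\lesssim e^{-c_I\ell}$. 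Without this lemma there is no proof; a priori $\mathfrak T$ could vanish at exactly the quasinormal frequencies.

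The second gap is structural: your route through a resonance expansion of a \emph{general} solution---writing $\mathcal B(\omega)$ as transmission times the Fourier transform of event-horizon data, shifting the contour to $\operatorname{Im}\omega=-\kappa_-/2$, and collecting residues---requires justifying meromorphic continuation, the contour deformation, and the control of error terms and the remaining mode sum, none of which you supply (and which you acknowledge as ``where the real work lies''). The paper avoids all of this: it takes the exact quasinormal modes themselves as the Cauchy data, shows (\cref{prop:qnmextendssmoothly}) that they extend analytically across $\mathcal H_R$ and lie in $C^\infty(\Sigma)\times C^\infty(\Sigma)\cap \underline H^1_0(\Sigma)\times\underline L^2(\Sigma)$, and then reads off the $H^1$ blowup directly from the singular branch $(r-r_-)^{-i\omega_\ell/\kappa_-}$ of the radial ODE at $r=r_-$, weighted by $\mathfrak T(\omega_\ell,\ell)$. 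Genericity is then obtained not from a single nonvanishing linear functional (which would only give a codimension-one exceptional set) but from the infinite-dimensional span of the $\{\psi_\ell\}_{\ell\ge\tilde\ell}$, whose angular orthogonality shows the exceptional set has infinite codimension. So while your heuristic picture matches the paper's, the proposal as written does not close either the central transmission-coefficient estimate or the analytic machinery your chosen route would additionally require.
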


\begin{rmk}
Since the existence of QNMs converging to the real axis expoentially fast as in \eqref{eq:qnms} also holds true for Kerr(--Newman)--AdS spacetimes, we expect that \cref{thm:maintheomintro} also extends to  Kerr--Newman--AdS spacetimes. 
\end{rmk}
\begin{rmk}
Our proof also extends to more general boundary condition such as Neumann or suitable Robin conditions for which no dissipation through $\mathcal I$ occurs.  
\end{rmk}

\paragraph{Brief overview of the proof} 
For the proof of \cref{thm:maintheomintro} and its more precise version \cref{thm:maintheorem}  (see already \cref{sec:precisestatement})  we will take an approach based on QNMs as introduced in \eqref{eq:qnms}. It suffices to take axisymmetric QNMs satisfying $m=0$ which we will do for simplicity.  As the QNMs are regular at the event horizon, we can and  will  extend them smoothly across the event horizon $\mathcal H_R$ as solutions to \eqref{eq:waveequation} on both the exterior and the interior.  Having the explicit separated form of \eqref{eq:qnms}, we then take the  limit   toward  the Cauchy horizon $\Ch_R$. The regularity of the solution at the Cauchy horizon will then depend on the behavior of the solution $R(r)$ to the radial o.d.e.\ coming from the separation of variables. To show $H^1$ blowup of $\psi$ we then show that the transmission coefficient $\mathfrak T$ does not have a zero at the quasinormal mode frequency. This is done via suitable approximations   of the radial o.d.e.\ solved by $R(r)$ in different domains in the black hole interior, see already \cref{eq:transmissioncoefnotzero}. This part of argument constitutes the most technical part. With the non-triviality of the transmission coefficient at hand, blowup of the local energy then follows directly for all QNMs which decay sufficiently slow on the exterior (i.e.\ which have sufficiently large angular momemtum $\ell$) such that  the bound (see already \eqref{eq:boundbeta})
\begin{align}\label{eq:boundbetaintro}
	\beta:= \frac{ -\operatorname{Im}(\omega_\ell)}{ \kappa_-} < \frac 12.
\end{align}
 is satisfied. Note that the condition \eqref{eq:boundbetaintro} has also played an crucial role in the $\Lambda>0$ case, see \cite{hintzvasyinterior,MR3697197,cardosoetal,dias2018strong,dias2018strong2,MR3882684}.
 \paragraph{A comment on the $\Lambda >0$ case.}
From the discussion above we recall that   \cref{thm:maintheomintro} is proved using QNMs and showing suitable lower bounds on the transmission coefficients in Fourier space. In contrast,  the linear analog of \cref{conj:linearanalogofh1} for $\Lambda =0$ is proved in  \cite{luk2016kerr} (see also \cite{dafermos2017time}) using a different physical-space approach in which lower bounds on  Price-law tails are propagated  into the interior via suitable  weighted energy estimates. 
 In the  remaining $\Lambda >0$ case, the linear analog  of \cref{conj:linearanalogofh1}  has no definitive resolution, yet: On the one hand, in the class of rough initial data ($H^{1+\epsilon}\times H^\epsilon$), the analog of \cref{conj:linearanalogofh1} for $\Lambda >0$ has been shown in \cite{dafermos2018rough}. On the other hand, in the class of smooth initial data ($C^\infty\times C^\infty$), the situation is different and it has been argued by considering QNMs that the analog of  \cref{conj:linearanalogofh1} for $\Lambda>0$ is false for Reissner--Nordström--de~Sitter close to extremality \cite{cardosoetal,dias2018strong,dias2018strong2,reallsurvey} but true for Kerr--de~Sitter \cite{MR3882684,MR4247555}. However, it still remains an open problem to put the analysis of   \cite{MR3882684,MR4247555} on a rigorous footing. We expect that the methods developed in the paper at hand---particularly the proof of the aforementioned lower bounds on the transmission coefficients in \cref{subsec:lowerboundont}---will turn out to be useful for such a proof.

 \paragraph{Outline of the paper} In \cref{sec:prelim21} we first set up the Reissner--Nordström--AdS family of black holes and recall the well-posedness of \eqref{eq:waveequation}. In \cref{sec:precisestatement} we state the main result \cref{thm:maintheorem} and recall the existence of quasinormal modes which converge to the real axis exponentially fast. Then, in \cref{sec:separationofvariablesandradialode} we recall the separation of variables, study the radial o.d.e.\ and show the lower bound on the transmission coefficients. Finally, in \cref{sec:proofofmaintheorem} we give the proof of \cref{thm:maintheorem} from which  \cref{thm:maintheomintro} directly follows.
 \paragraph{Acknowledgments} The author would like to express his gratitude to Mihalis Dafermos for many valuable comments on the manuscript and to Yakov Shlapentokh-Rothman for several useful discussions. This research is supported by Dr.\ Max Rössler, the Walter Haefner Foundation and the ETH Zürich Foundation.
 \section{Preliminaries}\label{sec:prelim21}
\subsection{The Reissner--Nordström--AdS black holes}
For black hole parameters $M>0, Q\neq 0, l^2\neq 0$ we define the polynomial  \begin{align}\Delta_{M,Q,l}(r):=r^2 - {2M}r + \frac{r^4}{l^2} + {Q^2}\end{align}
and define the non-degenerate set of subextremal parameters as 
\begin{align}
	\mathcal P :=\{ (M,Q,l) \in (0,\infty)\times \mathbb R \times (0,\infty) \colon \Delta_{M,Q,l}(r) \text{ has two postive roots satisfying } 0 < r_- < r_+  \}.
\end{align} 
We consider  \emph{\textbf{fixed}} parameters $M,Q,l,\alpha$, where \begin{align}\label{eq:parameters}(M,Q,l) \in \mathcal P \text{ and } \alpha < \frac 94 . \end{align}

Let the exterior  $\mathcal{R} $ be a smooth four dimensional manifold diffeomorphic to $\mathbb{R}^2 \times \mathbb{S}^2$.
On $\mathcal{R}$   we define---up to the well-known degeneracy of spherical coordinates---global coordinates \begin{align}\nonumber &(t  ,r, \theta ,\varphi  )  \in (r_+,\infty)\times \mathbb{R}\times \mathbb{S}^2.  \end{align}
 On the manifold $\mathcal{R}$ we define the Reissner--Nordström--Anti-de~Sitter metric
\begin{align}\label{eq:defnrnadsmetric}
	g := - \frac{\Delta(r)}{r^2}\d t\otimes\d t + \frac{r^2 }{\Delta}\d r \otimes\d r + r^2 (\d\theta \otimes\d\theta + \sin^2\theta \d \varphi\otimes \d\varphi ).
\end{align}

For $r\in (r_+,\infty)$ we  define the tortoise coordinate $r^\ast$ by
\begin{align}\label{eq:defnofrast}
	\frac{\d r^\ast}{\d r} := \frac{r^2}{\Delta},
\end{align}
with the normalization constant of $r^\ast(\infty ) = \frac{\pi }{2} l $. Then,  for $r>r_+$ we have \begin{align}r^\ast(r) =  \frac{1}{2 \kappa_+}\log(r - r_+) + f_\ast(r) \end{align}for some real-analytic \begin{align}f_\ast\colon (r_-, \infty) \to \mathbb R.\label{eq:firsttimefastarises}\end{align} Here, $\kappa_+$ is the surface gravity of the event horizon defined as 
\begin{align}
&\kappa_+ = \frac{1}{2} \partial_r \left( \frac{\Delta}{r^2} \right) \vert_{r=r_+} =\frac{r_+ -  M + 2r_+^3/l^2}{r_+^2}.
\end{align}

With $r^\ast$ at hand, we also define ingoing Eddington--Finkelstein type coordinates
\begin{align}
	v = t+r^\ast, r=r, \theta=\theta, \varphi= \varphi
\end{align}
in which the metric reads
\begin{align}
	g = - \frac{\Delta(r)}{r^2}\d v \otimes\d v +  2 \d v \otimes\d r + r^2 (\d\theta \otimes\d\theta + \sin^2\theta \d \varphi\otimes \d\varphi ).
\end{align}
Now, in these coordinates, $g$ extends smoothly to $(r,v,\theta,\varphi) \in  (r_-, \infty) \times \mathbb R \times \mathbb S^2 =: \mathring{ \mathcal M }$ and the region  $\mathcal R$ is identified with $r\in(r_+,\infty)$. We denote the region  $r\in (r_-,r_+)$ with    $\mathcal B$. The hypersurface $r=r_+$ is called the right event horizon $\mathcal H_R$.  We also define $T:= \partial_v$ which is a global Killing vector field. We also note that imposing $T$ to be future directed on $\mathcal R$ gives a unique time orientation on $(\mathring{\mathcal M},g)$. 

In the region $\mathcal B$, we   define $r^\ast$ as \begin{align}r^\ast (r) := \frac{1}{2\kappa_+}\log(r_+ - r) + f_\ast (r)\end{align}
which by construction satisfies  \eqref{eq:defnofrast}. We note that then we also have 
\begin{align}r^\ast(r) = \frac{-1}{2\kappa_-} \log(r - r_-) + g_\ast(r)\end{align} for some real-analytic $g_\ast \colon (0, r_+) \to \infty$.  Here, $\kappa_-$ is the surface gravity of the Cauchy horizon defined as 
\begin{align}
 	& \kappa_- = -\frac{1}{2} \partial_r \left( \frac{\Delta}{r^2} \right) \vert_{r=r_-} = -\frac{r_- -  M + 2r_-^3/l^2}{r_-^2}.
\end{align}
At this point we note that both $\kappa_+$ and $\kappa_-$ are positive as $\Delta$ is negative in $(r_-,r_+)$. We also remark that for $r\in (r_-,\infty)\setminus r_+$ we have 
\begin{align}
	|r-r_+|^{\frac{1}{2\kappa_+}} = e^{ r^\ast(r) - f_\ast(r) } 
\end{align}
and for $r\in (r_- ,r_+)$ we have 
\begin{align}
	|r-r_-|^{\frac{-1}{2\kappa_-}} = e^{ -r^\ast(r) + g_\ast(r) }.
\end{align}

In $\mathcal B$ we also set $t:= v-r^\ast$ and $u:=  2 r^\ast - v$. In particular, in the region $\mathcal B$ we   have coordinates $(t,r,\theta,\varphi)$ in which the metric takes  the form \eqref{eq:defnrnadsmetric}. Further, we will also use  outgoing Eddington--Finkelstein type coordinates $(r,u,\theta,\varphi)$ in $\mathcal B$ in which the metric takes the form
\begin{align}
	g = - \frac{\Delta(r)}{r^2}\d u \otimes\d u -  2 \d u \otimes\d r + r^2 (\d\theta \otimes\d\theta + \sin^2\theta \d \varphi\otimes \d\varphi ).
\end{align}
In these  outgoing Eddington--Finkelstein type coordinates $(r,u,\theta,\varphi)$  we attach the right  Cauchy horizon as $\mathcal{CH}_R = \{ r=r_- \}$.  The metric $g$ extends smoothly to $\mathcal M:= \mathring{\mathcal M} \cup \Ch_R$ and we have constructed the time-oriented Lorentzian manifold $(\mathcal M,g)$.  Further we write    $\mathcal I = \{r = \infty\}$  for the  conformally timelike boundary    at infinity and we refer to \cite{CKehle2019} for more details.

In order to pose our initial data for \eqref{eq:waveequation} we let \begin{align}\Sigma \subset \mathcal M\label{eq:defnofsigma}
\end{align} be a connected spacelike spherically symmetric hypersurface which extends to the conformal infinity $\mathcal I$ and transversally intersects the event horizon $\mathcal H_R$ with a boundary consisting of a single sphere in $\mathcal B$.  For simplicity we assume that for   $R \in \mathbb R$ sufficiently large,  $\Sigma$ agrees with the   $\{t=0\}$ hypersurface for $r\geq R$. 

We have illustrated the constructed spacetime as a Penrose diagram, the spacelike hypersurface $\Sigma$ as well as $D^+(\Sigma)$ in \cref{fig:rnads}.
   
\begin{figure}
	\begin{center}
		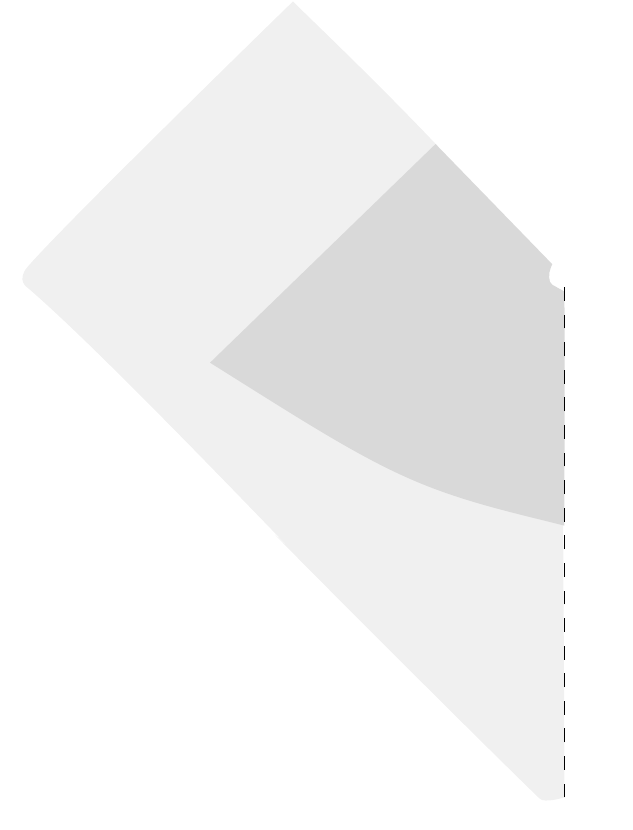
	\end{center}
	\caption{Penrose diagram of $(\mathcal M, g)$ and initial hypersurface $\Sigma$. The darker shaded region denotes $D^+(\Sigma)$.}
	\label{fig:rnads}
\end{figure}
 \subsection{Well-posedness of the Klein--Gordon equation \texorpdfstring{\eqref{eq:waveequation}}{}}
We will consider solutions to \eqref{eq:waveequation} with data posed on $\Sigma$ and Dirichlet conditions imposed on $\mathcal I$, where we say that a $C^1$ function $\psi$ on $\mathcal M$ satisfies the Dirichlet condition if 
\begin{align} \label{eq:dirichletcond}
	f^{-1}(r) \psi\to0 \text{ as }  r\to \infty,\end{align}
 where $f(r)$ is the twisting function defined as \begin{align} f(r):=  r^{-\frac 32+\sqrt{\frac 94 - \alpha}}. \end{align}

To do so we need further definitions. Let $\mathcal S$ be  an arbitrary connected spacelike hypersurface which extends to the conformal infinity $\mathcal I$. For any such $\mathcal S$ we  define   $n_{\mathcal S}$ as the unit future directed normal to $\mathcal S$ and the renormalized $\hat n_{\mathcal S}$ is then defined as \begin{align}\hat n_{\mathcal S} := r n_{\mathcal S}.\end{align}
To the twisting function $f$ we associate the twisted derivative as \begin{align}\tilde\nabla_\mu \psi:= f \nabla_\mu ( f^{-1} \psi).\end{align}
We further define the norms 
\begin{align}  \| \psi \|_{\underline L^2(\mathcal S)}^2 := \int_{\mathcal S} |\psi|^2 r^{-1} \dvol_{\mathcal S} \text{ and } \| \psi \|_{\underline H^1 (\mathcal S)}^2 := \int_{\mathcal S} \left(|\tilde \nabla \psi|^2 + |\psi|^2 r^{-2} \right) r \dvol_{\mathcal S}.\end{align} Finally, we define $\underline H_0^1(\mathcal S)$ as the completion of smooth compactly supported functions on $\mathcal S$ with respect to the norm $\| \cdot \|_{\underline H^1(\mathcal S)}$. 
As  $\Sigma$ agrees with a constant $t$-hypersurface for $r\geq R$, we note that 
\begin{align}
\dvol_{\Sigma} = \frac{r^3}{\sqrt{\Delta}} \d r \d \sigma_{\mathbb S^2} \sim r\d r \d \sigma_{\mathbb S^2} 
\end{align}
for $r\geq R$ and thus, 
\begin{align}& |\psi|^2 r^{-1} \dvol_{\Sigma} \sim |\psi|^2 \d r \d \sigma_{\mathbb S^2}, \\
	&|\tilde \nabla \psi |^2 r \dvol_{\Sigma} \sim |\tilde \nabla \psi|^2 r^2 \d r \d \sigma_{\mathbb S^2} \end{align} for $r\geq R$. 
We now state the following well-posedness as established in \cite{warnick_massive_wave,wellposed}. 
\begin{prop}[\cite{warnick_massive_wave,wellposed}]\label{prop:wellposed}
Let $\Psi_0 \in \underline H_0^{1}(\Sigma), \Psi_1 \in \underline L^2(\Sigma)$. Then, there exists a unique (weak) solution $\psi$ to \eqref{eq:waveequation} on $  D^+(\Sigma)$  such that $\psi\vert_\Sigma = \Psi_0$, $\hat n_{\Sigma} \psi \vert_{\Sigma} = \Psi_1$ with Dirichlet boundary conditions imposed at $\mathcal I$. Moreover, if $\mathcal S\subset   D^+(\Sigma)$ is a spacelike hypersurface, then $\psi\vert_{\mathcal S} \in \underline H_0^{1}(\mathcal S), \hat n_{\mathcal S} \psi \vert_{\mathcal S} \in \underline L^2(\mathcal S)$. 
\end{prop}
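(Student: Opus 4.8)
The plan is to recast \eqref{eq:waveequation} as an abstract second-order-in-time evolution equation governed by a self-adjoint spatial operator and to apply the standard Hilbert-space theory of the wave equation; the asymptotically anti-de~Sitter boundary $\mathcal I$ is handled by the twisted-derivative formalism together with the Breitenlohner--Freedman bound $\alpha<\frac94$, while the interior region $\mathcal B$ is genuinely globally hyperbolic and is treated by classical energy estimates.

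First I would use that $\Sigma$ coincides with $\{t=0\}$ for $r\geq R$ and that the metric \eqref{eq:defnrnadsmetric} is static there, with Killing field $T=\partial_v=\partial_t$. Writing the Klein--Gordon operator in the coordinates $(t,r,\theta,\varphi)$ gives, on this static neighborhood of $\mathcal I$,
\begin{align}\label{eq:planevol}
\partial_t^2\psi \;=\; \frac{\Delta}{r^2}\left(\frac{1}{r^2}\partial_r(\Delta\,\partial_r\psi) + \frac{1}{r^2}\slashed{\Delta}\,\psi + \frac{\alpha}{l^2}\psi\right) \;=:\; -A\psi,
\end{align}
where $\slashed{\Delta}$ is the round-sphere Laplacian, so that the evolution takes the form $\partial_t^2\psi + A\psi = 0$. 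The operator $A$ is a Schr\"odinger-type operator whose coefficients degenerate at $\mathcal I$ because $\Delta\sim r^4/l^2$ as $r\to\infty$. I would express the quadratic form associated with $A$ through the twisted derivative $\tilde\nabla$; this is exactly the content of the norms $\underline L^2(\Sigma)$ and $\underline H^1(\Sigma)$. The twisting by $f(r)=r^{-\frac32+\sqrt{\frac94-\alpha}}$ is chosen precisely so that the integration by parts identifying the form with $\int|\tilde\nabla\psi|^2$ produces no boundary contribution at $\mathcal I$ for functions obeying the Dirichlet condition \eqref{eq:dirichletcond}.

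The crucial step is to show that, with form domain $\underline H_0^1(\Sigma)$ defined as the completion of compactly supported functions, the form is closed, symmetric and bounded below. This rests on a Hardy-type inequality near $\mathcal I$ controlling $\int|\psi|^2 r^{-2}\,r\,\dvol_\Sigma$ by $\int|\tilde\nabla\psi|^2\,r\,\dvol_\Sigma$, whose constant is finite exactly when $\sqrt{\frac94-\alpha}$ is real, i.e. when the Breitenlohner--Freedman bound holds. The Friedrichs extension then furnishes a self-adjoint operator $A$ with form domain $\underline H_0^1(\Sigma)$ encoding the Dirichlet condition, and the spectral theorem applied to $\cos(t\sqrt A)$ and $\sin(t\sqrt A)/\sqrt A$ yields a unique solution with $\psi(t)\in\underline H_0^1$ and $\partial_t\psi(t)\in\underline L^2$, along with conservation of the associated $T$-energy, throughout the static neighborhood of $\mathcal I$.

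It then remains to globalize. For $r\leq R$, including all of $\mathcal B$, there is no timelike conformal boundary at which data must be prescribed, so the spacetime is locally an ordinary globally hyperbolic region and the solution built near $\mathcal I$ is propagated by the classical well-posedness of the linear wave equation together with finite speed of propagation. Since $T$ is Killing on all of $\mathring{\mathcal M}$, contracting the energy--momentum tensor of $\psi$ with $T$ and applying the divergence theorem on the region bounded by $\Sigma$ and an arbitrary spacelike $\mathcal S\subset D^+(\Sigma)$ controls $\|\psi\|_{\underline H^1(\mathcal S)}$ and $\|\hat n_{\mathcal S}\psi\|_{\underline L^2(\mathcal S)}$ by the data norms, which gives the stated regularity on every such slice. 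The main obstacle is the boundary analysis at $\mathcal I$: proving the Hardy inequality and the coercivity of the twisted form with the sharp dependence on $\alpha$, and checking that the Friedrichs extension reproduces exactly the Dirichlet realization \eqref{eq:dirichletcond} rather than a different self-adjoint boundary condition. This is precisely where the Breitenlohner--Freedman bound is used and is the technical heart of \cite{warnick_massive_wave,wellposed}; the interior propagation and the energy estimates are comparatively routine.
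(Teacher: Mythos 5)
The paper does not prove this proposition at all: it is quoted verbatim from the cited references \cite{warnick_massive_wave,wellposed}, so there is no internal proof to compare against. Judged on its own terms, your sketch assembles the right ingredients for the boundary analysis --- the twisted derivative $\tilde\nabla_\mu\psi=f\nabla_\mu(f^{-1}\psi)$ with $f=r^{-3/2+\sqrt{9/4-\alpha}}$, a Hardy inequality near $\mathcal I$ whose validity is exactly the Breitenlohner--Freedman bound, and the identification of the Dirichlet realization --- and these are indeed the technical heart of the cited works. Your route (Friedrichs extension plus the spectral calculus $\cos(t\sqrt A)$, $\sin(t\sqrt A)/\sqrt A$) is, however, genuinely different from what those references do: they construct solutions by twisted energy estimates combined with approximation/Galerkin-type arguments rather than by self-adjoint spectral theory. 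The spectral route buys you a cleaner functional-analytic statement on a static slice; the energy route generalizes to non-static settings and to the rougher data classes actually needed.

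There is one concrete gap in your globalization step. You propose to solve near $\mathcal I$ by spectral theory and then ``propagate inward'' by classical well-posedness and finite speed of propagation, as if the two regions decoupled. They do not: because $r^\ast(\infty)=\tfrac{\pi}{2}l$ is finite, outgoing signals launched from the region $r<R$ reach $\mathcal I$ in finite coordinate time $t$ and are reflected back by the Dirichlet condition, so the near-boundary solution is not determined by the data on $\Sigma\cap\{r\geq R\}$ alone for any uniform time interval, and conversely the interior evolution needs the reflected wave as input. A correct argument must either run a single global (twisted) $T$-energy estimate for the full initial--boundary value problem and construct the solution by approximation, or set up an explicit iteration between the two regions with a convergence argument; neither is supplied by ``finite speed of propagation'' as stated. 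Relatedly, the spectral construction requires data on a static slice, whereas $\Sigma$ agrees with $\{t=0\}$ only for $r\geq R$ and terminates inside $\mathcal B$, so one cannot directly insert $(\Psi_0,\Psi_1)$ into the spectral solution formula without an intermediate local-in-time Cauchy evolution to transfer the data. These are repairable, but as written the gluing is the missing step.
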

\begin{rmk}
	Similar statements as in \cref{prop:wellposed} hold for higher regularity replacing $\underline H_0^{1}(\Sigma),  \underline L^2(\Sigma)$ with appropriate $\underline H_0^{k+1}(\Sigma),  \underline H^{k}(\Sigma)$, see \cite[Section~5.3]{warnick_massive_wave}. Thus, for sufficiently regular initial data we obtain classical solutions to the initial boundary value problem.
\end{rmk}

\paragraph{Conventions}
With $a\lesssim b$ for $a\in \mathbb{R}$ and $b\geq 0$ we mean that there exists a constant $C(M,Q,l,\alpha, \Sigma) > 0$ with $a\leq C b$. If $C(M,Q,l,\alpha, \Sigma)$ depends on an additional parameter, say $\ell$, we will write $a \lesssim_\ell b$.  We also use $a\sim b$ for some $a,b\geq 0$ if there exist constants $C_1(M,Q,l,\alpha,\Sigma), C_2(M,Q,l,\alpha,\Sigma) >0$ with $C_1 a \leq b \leq C_2 a$. 
We shall also make use of the standard Landau notation $O$ and $o$ \cite{olver}. To be more precise, let $X$ be a point set (e.g.\ $X=\mathbb R, [a,b], \mathbb C$)  with limit point $c$. As $x\to c$ in $X$, $f(x) = O(g(x))$ (resp.\ $f(x) = o(g(x))$) means $\frac{|f(x)|}{|g(x)|} \leq C(M,Q,l,\alpha,\Sigma)$ (resp.\ $\frac{|f(x)|}{|g(x)|} \to 0$ with rate depending on $M,Q,l,\alpha,\Sigma$). We write $O_\ell(g(x))$ or $o_\ell(g(x))$ if the constant $C$ or the rate depend  on an additional parameter $\ell$.

\section{Precise statement of the main result}
\label{sec:precisestatement}
 We now give the precise verison of our main result \cref{thm:maintheomintro}.
 
\begin{theorem}\label{thm:maintheorem}
	There exist smooth Cauchy data $(\Psi_0,\Psi_1) \in C^\infty(\Sigma)\times C^\infty(\Sigma)\cap \underline H_0^1(\Sigma) \times \underline L^2(\Sigma)$ for \eqref{eq:waveequation} such that the corresponding smooth solution $\psi$ with Dirichlet boundary conditions imposed at $\mathcal I$ satisfies 
	\begin{align}\label{eq:blowup}
		\| \psi\|_{\dot{H}^1(\mathcal N)}  = \infty,
	\end{align}
	i.e.\ $\psi \notin H^1(\mathcal N)$, where $\mathcal N$ is any constant $u$ hypersurface emanting from a sphere in $D^+(\Sigma)\cap  \mathcal B $ and terminating on a sphere of $\mathcal{CH}_R$. 
	
Moreover, blowup of the local energy as in \eqref{eq:blowup} is a generic property of the Cauchy data in the following sense. The space $H$ of smooth Cauchy data in  $C^\infty(\Sigma)\times C^\infty(\Sigma)\cap \underline H_0^1(\Sigma) \times \underline L^2(\Sigma)$ for which \eqref{eq:blowup} is not true is exceptional in the sense that $H$ has infinite co-dimension in  $C^\infty(\Sigma)\times C^\infty(\Sigma)\cap \underline H_0^1(\Sigma) \times \underline L^2(\Sigma)$.
\end{theorem}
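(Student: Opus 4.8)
The plan is to realise the blowup through a single quasinormal mode and then upgrade to the genericity statement by varying the angular parameter. First I would fix a large $\ell$ and take the axisymmetric ($m=0$) quasinormal mode $\psi_\ell=e^{-i\omega_\ell t}R_\ell(r)Y_\ell(\theta)$ whose existence, with $\operatorname{Re}(\omega_\ell)\sim\ell$ and $\operatorname{Im}(\omega_\ell)\sim-e^{-c_I\ell}$, is recalled in \cref{sec:precisestatement}. Because $\psi_\ell$ is regular at $\mathcal H_R$ it extends smoothly across the event horizon into $\mathcal B$ as a solution of \eqref{eq:waveequation}, and it satisfies the Dirichlet condition at $\mathcal I$ by construction. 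Restricting $\psi_\ell$, or its real and imaginary parts, to $\Sigma$ then yields smooth Cauchy data in $C^\infty(\Sigma)\times C^\infty(\Sigma)\cap\underline H_0^1(\Sigma)\times\underline L^2(\Sigma)$, and by the uniqueness clause of \cref{prop:wellposed} the solution launched from this data agrees with $\psi_\ell$ on all of $D^+(\Sigma)$. It therefore suffices to verify \eqref{eq:blowup} for $\psi_\ell$ itself.

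Next I would extract the behaviour of $\psi_\ell$ at $\mathcal{CH}_R$. Near $r=r_-$ the potential of the radial o.d.e.\ solved by $R_\ell$ decays exponentially in $r^\ast$, so $R_\ell$ splits into the two characteristic modes $R_\ell=\mathfrak a_\ell e^{i\omega_\ell r^\ast}+\mathfrak b_\ell e^{-i\omega_\ell r^\ast}$ up to lower-order terms, where $\mathfrak b_\ell$ is, up to a nonvanishing normalisation, the transmission coefficient $\mathfrak T(\omega_\ell)$. Using $t=r^\ast-u$ and $r^\ast=\tfrac{-1}{2\kappa_-}\log(r-r_-)+g_\ast(r)$ in the outgoing coordinates $(u,r)$ on $\mathcal B$, this becomes
\begin{align*}
\psi_\ell=e^{i\omega_\ell u}\,Y_\ell\big(\mathfrak a_\ell+\mathfrak b_\ell\,(r-r_-)^{i\omega_\ell/\kappa_-}h(r)\big)+(\text{l.o.t.}),
\end{align*}
with $h$ smooth and nonvanishing at $r_-$. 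Since $\operatorname{Re}(i\omega_\ell/\kappa_-)=\beta:=-\operatorname{Im}(\omega_\ell)/\kappa_->0$, the $\mathfrak a_\ell$-summand extends continuously across $\mathcal{CH}_R$, in accordance with the boundedness established in \cite{CKehle2019}, whereas the generator derivative $\partial_r$ (at fixed $u$) of the $\mathfrak b_\ell$-summand has modulus $\sim|\mathfrak b_\ell|(r-r_-)^{\beta-1}$.

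Since $r$ is a regular coordinate across $\mathcal{CH}_R$, the seminorm $\|\cdot\|_{\dot H^1(\mathcal N)}$ dominates $\int_{\mathcal N}|\partial_r\psi_\ell|^2\,r^2\,\d r\,\d\sigma_{\mathbb S^2}$, while the angular contributions remain integrable; the radial part thus reduces to $|\mathfrak b_\ell|^2\int_{r_-}(r-r_-)^{2\beta-2}\,\d r$, which diverges precisely when $\beta<\tfrac12$ and $\mathfrak b_\ell\neq0$. The first condition is automatic for large $\ell$, since \eqref{eq:qnmsbounds} gives $\beta\sim\kappa_-^{-1}e^{-c_I\ell}\to0$. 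The whole argument therefore hinges on the non-vanishing $\mathfrak b_\ell\neq0$, i.e.\ $\mathfrak T(\omega_\ell)\neq0$, and this is the main obstacle. The hard part will be to prove a quantitative lower bound $|\mathfrak T(\omega_\ell)|>0$: I would approximate the radial o.d.e.\ separately in a neighbourhood of $\mathcal H_R$, in an intermediate region where $\operatorname{Re}(\omega_\ell)\sim\ell$ forces a semiclassical/turning-point analysis, and in a neighbourhood of $\mathcal{CH}_R$, and then match the resulting connection coefficients, using crucially that $\omega_\ell$ lies exponentially close to the real axis so that the transported mode is not annihilated.

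Finally, for the genericity statement I would observe that the set $H$ of smooth data for which \eqref{eq:blowup} fails is a linear subspace, since finiteness of $\|\cdot\|_{\dot H^1(\mathcal N)}$ is stable under linear combinations. Given any $N$, I would pick distinct large $\ell_1<\dots<\ell_N$ in the regime $\beta<\tfrac12$ and take the real solutions $\operatorname{Re}\psi_{\ell_k}$ together with their data $d_1,\dots,d_N$ on $\Sigma$ (each of which blows up individually, as the singular envelope $\sim(r-r_-)^{\beta}$ oscillates only in $\log(r-r_-)$ and so still renders $\int_{r_-}(r-r_-)^{2\beta-2}\,\d r$ divergent). The decisive point is angular orthogonality: $\|\cdot\|_{\dot H^1(\mathcal N)}$ integrates over the spheres of $\mathcal N$ and the $Y_{\ell_k}$ are mutually $L^2(\mathbb S^2)$-orthogonal, so the singular contributions of distinct $\ell_k$ cannot cancel. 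Hence for any nonzero real $(c_k)$ one has $\|\sum_k c_k\operatorname{Re}\psi_{\ell_k}\|_{\dot H^1(\mathcal N)}^2\gtrsim\sum_k c_k^2\,\|\operatorname{Re}\psi_{\ell_k}\|_{\dot H^1(\mathcal N)}^2=\infty$, so $d_1,\dots,d_N$ are linearly independent modulo $H$. As $N$ was arbitrary, $H$ has infinite co-dimension, which completes the proof.
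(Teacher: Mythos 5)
Your proposal is correct and follows essentially the same route as the paper: extend the exponentially-slowly-decaying axisymmetric quasinormal modes across $\mathcal H_R$, isolate the transmitted part $\mathfrak T(\omega_\ell)\,(r-r_-)^{i\omega_\ell/\kappa_-}$ at $\mathcal{CH}_R$ whose $r$-derivative fails to be square integrable precisely when $\beta=-\operatorname{Im}(\omega_\ell)/\kappa_-<\tfrac12$, and obtain infinite codimension from the $L^2(\mathbb S^2)$-orthogonality of the $Y_{\ell}$. Your identification of the quantitative lower bound $|\mathfrak T(\omega_\ell)|\gtrsim 1$ as the crux, together with the three-region matching strategy (Volterra near the horizons, WKB with a complex, non-vanishing potential in between — no actual turning points arise), is exactly the content of the paper's \cref{eq:transmissioncoefnotzero}.
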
 

We   note that \cref{thm:maintheorem} implies \cref{thm:maintheomintro}. 
\begin{rmk} The statement \eqref{eq:blowup} is independent of the choice of volume form   on $\mathcal N$. In particular, for any fixed choice of the volume form on $\mathcal N$ we have 
\begin{align}\label{eq:h1dotnorm} \| \psi\|_{\dot{H}^1(\mathcal N)}^2 \sim \int_{\mathcal N} \left( |\partial_r \psi|^2(r,u,\theta,\varphi) + |\slashed \nabla \psi|^2 (r,u,\theta,\varphi)\right)r^2  \d r \d\sigma_{\mathbb S^2},\end{align}
where the implicit constant in $\sim$   depends on the fixed choice of volume form on $\mathcal N$ and with  $\slashed \nabla$ we denote the angular derivatives.
\end{rmk}

The proof of \cref{thm:maintheorem}  will be given in \cref{sec:proofofmaintheorem} and is based on an explicit quasinormal mode construction and will crucially rely on the existence of quasinormal modes on the exterior with imaginary part converging exponentially fast to the real axis. This has been established for Schwarzschild--AdS and Kerr--AdS  in \cite{quasinormal_gannot,MR3223487,quasimodes}, see also \cite[Theorem~11]{gannot2016quasinormal}.  Mutatis mutandis the proof of the existence of quasinormal modes in \cite[Theorem~11]{gannot2016quasinormal} for Schwarzschild--AdS also applies to Reissner--Nordström--AdS and we obtain the following theorem.

\begin{theorem} [\cite{MR3223487,gannot2016quasinormal}]
	\label{thm:existenceofqnm}
There exists a sequence of quasinormal mode frequencies $\omega_\ell \in \mathbb C$, $\ell \geq \ell_0$ for some $\ell_0$ sufficiently large and  non-trivial quasinormal modes $R_\ell  \in C^\infty(r_+, \infty)$ such that 
\begin{align}
	\psi_\ell (t, r, \theta,\varphi)= e^{-i \omega_\ell t} R_\ell(r) Y_{\ell,0}(\theta) 
\end{align}
satisfies \eqref{eq:waveequation} on $\mathcal R$ and
\begin{align} \label{eq:reomegal}
 &  \operatorname{Re}(\omega_{\ell} ) \sim \ell,  \\
&0 < - \operatorname{Im}(\omega_{\ell} )  \lesssim    e^{-c_{\mathrm{I}} \ell } \label{eq:imomegal}
\end{align}
for some constant $c_I >0$. 
Moreover, $R_\ell$ satisfies the   boundary conditions 
\begin{enumerate}
	\item \label{item1} $  R_\ell(r) =     (r - r_+)^{\frac{-i \omega}{2 \kappa_+}} \tilde  R_{\ell} ( r)    $, where  $\tilde R_{\ell} ( r)$ extends smoothly to $r\in (r_-,  \infty)$ with $\tilde R_{\ell} ( r_+)  = 1$. 
	\item  \label{item2}$ \lim_{r\to\infty}   r^{\frac 32-\sqrt{\frac 94 - \alpha}}  R_\ell( r)   =0$.
\end{enumerate}
\end{theorem}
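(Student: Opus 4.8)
The plan is to adapt the semiclassical, resonance-theoretic construction of \cite{MR3223487,gannot2016quasinormal} from Schwarzschild--AdS to the Reissner--Nordström--AdS potential. First I would separate variables, writing $\psi_\ell = e^{-i\omega t} R_\ell(r) Y_{\ell,0}(\theta)$; since $Y_{\ell,0}$ is a spherical harmonic with eigenvalue $\ell(\ell+1)$, substitution into \eqref{eq:waveequation} reduces the problem to the radial ODE
\begin{align*}
	(\Delta R_\ell')' + \left( \frac{\omega^2 r^4}{\Delta} - \ell(\ell+1) + \frac{\alpha r^2}{l^2} \right) R_\ell = 0
\end{align*}
on $(r_+,\infty)$. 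Passing to the tortoise coordinate $r^\ast$ (so the horizon sits at $r^\ast=-\infty$ and $\mathcal I$ at the finite endpoint $r^\ast = \tfrac{\pi}{2}l$) and applying the twisted conjugation associated with $f(r) = r^{-\frac32+\sqrt{\frac94-\alpha}}$, I would bring the equation into Schrödinger form $-\partial_{r^\ast}^2 w + V_\ell w = \omega^2 w$. The structural input is that for large $\ell$ the leading potential is $V_\ell \approx \ell(\ell+1)\,\frac{\Delta}{r^4}$, which exhibits \emph{stable trapping}: it decays to $0$ as $r^\ast\to-\infty$, possesses a single nondegenerate barrier at the photon sphere (a local maximum of $\Delta/r^4$, located at the larger root of $r^2 - 3Mr + 2Q^2 = 0$), and is confined on the other side by the reflecting wall at $\mathcal I$, where the twisted potential has an inverse-square singularity enforcing the Dirichlet condition (item \ref{item2}). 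Thus a potential well sits between the photon-sphere barrier and $\mathcal I$. I would check that this well-plus-barrier geometry, known for Schwarzschild--AdS, persists for every subextremal parameter in $\mathcal P$.

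With $h := (\ell(\ell+1))^{-1/2}$ as semiclassical parameter, I would construct quasimodes supported in the well by WKB, gluing the decaying indicial branch at $\mathcal I$ (available because $\alpha<\tfrac94$) to the WKB oscillation across the classically allowed interval. A Bohr--Sommerfeld quantization condition $\int_{r^\ast_1}^{r^\ast_2} \sqrt{\omega^2 - V_\ell}\, dr^\ast = \pi(n+\tfrac12) + o(1)$ on the well then selects, for fixed overtone number, a sequence of real quasi-frequencies with $\operatorname{Re}\omega \sim \ell$, giving \eqref{eq:reomegal}. Crucially, because the well is shielded from the dissipative horizon end by the photon-sphere barrier, the cut-off WKB state is an approximate solution whose residual has exponentially small norm $O(e^{-c/h}) = O(e^{-c'\ell})$, the exponent being the Agmon distance across the barrier.

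Finally I would promote these quasimodes to genuine quasinormal frequencies. Invoking the meromorphic continuation of the resolvent for the AdS wave operator with reflecting boundary conditions — established for Schwarzschild--AdS and transferring with only cosmetic changes to Reissner--Nordström--AdS — the quasinormal frequencies are precisely its poles, and a Tang--Zworski/Stefanov-type perturbation argument upgrades a quasimode with error $O(e^{-c'\ell})$ to an actual pole $\omega_\ell$ within distance $O(e^{-c'\ell})$, yielding existence and the bound $-\operatorname{Im}\omega_\ell \lesssim e^{-c_I\ell}$ in \eqref{eq:imomegal}. Strict positivity $-\operatorname{Im}\omega_\ell > 0$ holds because energy genuinely leaks through the horizon (the mode is outgoing there), so no nontrivial resonance can sit exactly on the real axis; the smoothness of $R_\ell$ on $(r_+,\infty)$ and the horizon factorization of item \ref{item1} follow from the indicial equation at $r=r_+$, whose exponents are $0$ and $-\tfrac{i\omega}{2\kappa_+}$, with regularity across $\mathcal H_R$ selecting the stated branch. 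I expect the main obstacle to be exactly this passage from quasimode to resonance with \emph{sharp} exponential control of $\operatorname{Im}\omega_\ell$: it requires both the AdS resolvent-continuation framework and a careful tunneling (Agmon) estimate through the photon-sphere barrier, and it is here — rather than in the Bohr--Sommerfeld computation of $\operatorname{Re}\omega_\ell$ — that the analytic work concentrates.
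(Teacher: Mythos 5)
Your proposal is a faithful outline of exactly the argument the paper invokes: the paper does not prove \cref{thm:existenceofqnm} itself but cites \cite{MR3223487,gannot2016quasinormal} with a ``mutatis mutandis'' remark, and those references proceed precisely as you describe --- quasimode construction in the well between the photon-sphere barrier and the reflecting AdS boundary, Agmon/tunneling estimates giving an exponentially small residual, and a Tang--Zworski-type argument upgrading quasimodes to resonances of the meromorphically continued resolvent, with the horizon factorization of item~\ref{item1} coming from the indicial analysis at $r=r_+$. So the approach is essentially the same as the paper's (cited) proof, and your identification of the barrier/well structure persisting for Reissner--Nordström--AdS is exactly the point that needs checking in the adaptation.
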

\begin{rmk}
Note that  boundary condition~\ref{item1} in \cref{thm:existenceofqnm} corresponds to the fact that  $\psi_\ell$ extends smoothly  across the event horizon $\mathcal H_R$ and  boundary condition~\ref{item2}  corresponds to the fact that $\psi_\ell$ satisfies the Dirichlet boundary conditions at $\mathcal I$ as in \eqref{eq:dirichletcond}.
\end{rmk}
\begin{rmk}
Strictly speaking, the existence of quasinormal modes  has been only shown for Schwarzschild--AdS and Kerr--AdS in \cite{MR3223487,quasinormal_gannot,quasimodes}, but---mutatis mutandis---the proof directly extends to Reissner--Nordström--AdS. 
\end{rmk}
\section{Separation of variables and radial o.d.e.}
\label{sec:separationofvariablesandradialode}
We begin by reviewing the procedure of separation of variables for solutions to \eqref{eq:waveequation}. Formally, we have that \eqref{eq:waveequation} admits mode solutions  of the form
\begin{align}
\psi(t,r,\theta,\varphi) = e^{-i\omega t } R( \omega,\ell, r) Y_{\ell, m}(\theta) e^{i m \varphi},
\end{align}
for $\omega \in \mathbb C$, $\ell \in \mathbb N_0$, $m \in \mathbb Z_{|m|\leq \ell}$. Here $Y_{\ell, m } (\theta)e^{im \varphi} $ denote  the standard spherical harmonics. For the purpose of proving \cref{thm:maintheorem} it is sufficient to consider only axisymmetric mode solutions ($m=0$), i.e. solutions of the form
\begin{align}\label{eq:separated}
\psi(t,r,\theta) = e^{-i\omega t } R(\omega,\ell,r)  Y_{\ell,0}(\theta)
\end{align}
for $\omega \in \mathbb C$, $\ell \in \mathbb N_{0}$. We note that a mode solution $\psi(t,r,\theta)$ of the form \eqref{eq:separated} solves \eqref{eq:waveequation} on $\mathcal R$ (resp.\ on $\mathcal B$)  if and only if $R(\omega,\ell,r)$ satisfies the radial o.d.e.\ 
\begin{align}\label{eq:radialode1}
 \frac{\Delta}{r^4} \frac{\d }{\d r } \left(\Delta  \frac{\d R}{\d r }\right) + \omega^2 R -\ell (\ell+1)\frac{\Delta}{ r^4}  R + \frac{\alpha}{l^2}\frac{\Delta}{r^2} R= 0 
\end{align}
  for $r\in (r_+,\infty)$ (resp.\ for  $r\in (r_-, r_+)$). 
We now define fundamental solutions to \eqref{eq:radialode1} associated to the singular points of \eqref{eq:radialode1} located at $r=r_-, r_+, \infty$. 
\subsection{Fundamental solutions associated to the singular points}
\begin{lemma}\label{lem:defnrs}
	For $\omega \in \mathbb C $ with  $|\operatorname{Im}(\omega) | < \min(\kappa_+,  \kappa_- )$, $\ell \in \mathbb Z_{\geq 0} $ there exist solutions $R_{D}, R_{N}, R_{\mathcal H_R}$, $R_{\mathcal{CH}_L}$, $R_{\mathcal{CH}_R}$ to \eqref{eq:radialode1} with the following properties. 
		\begin{align}\label{eq:rd}
		& R_{D}( \omega, \ell, r) = r^{-\frac 32- \sqrt{\frac 94  - \alpha} } \tilde R_D( \omega, \ell, r) \text{ on }  r \in(r_+,\infty), \\	
	\label{eq:rn}	& R_{N} ( \omega, \ell, r)= r^{-\frac 32 + \sqrt{\frac 94  - \alpha} } \tilde R_N( \omega, \ell, r) + C_3  (\omega,\ell)  r^{-\frac 32 - \sqrt{\frac 94 -\alpha}  } \log \frac 1 r\text{ on }  r \in(r_+,\infty), \\
		&	\label{eq:rhr1} R_{\mathcal H_{R}} ( \omega, \ell, r) = |r-r_+|^{- \frac{i \omega}{2\kappa_+}} \tilde R_{\mathcal H_R} ( \omega, \ell, r) = e^{-i\omega r^\ast(r) + i \omega f_\ast(r)}  \tilde  R_{\mathcal H_R} (\omega,\ell,r) \text{ on }  r\in (r_-, \infty) \setminus \{ r_+ \},\\
		&		R_{\mathcal{CH}_L} ( \omega, \ell, r) = (r-r_-)^{ - \frac{i \omega}{2\kappa_-}} \tilde R_{\mathcal{CH}_L}( \omega, \ell, r) =  e^{-i \omega r^\ast(r) + i \omega g_\ast(r) }  \tilde R_{\mathcal{CH}_L}  (\omega,\ell,r)  \text{ on }  r\in (r_-, r_+)   ,\\
		&			R_{\mathcal{CH}_R} ( \omega, \ell, r) = (r-r_-)^{ \frac{i \omega}{2\kappa_-}} \tilde R_{\mathcal{CH}_R}( \omega, \ell, r)=   e^{ i \omega r^\ast(r) - i \omega g_\ast(r) } \tilde  R_{\mathcal{CH}_R} (\omega,\ell,r) \text{ on }  r\in (r_-, r_+ )  ,\label{eq:rchl}
	\end{align}
	where \begin{itemize}
		\item $R_N$ and $R_D$ are linearly independent,
		\item $\tilde R_{D}, \tilde R_{N}$ are analytic for $r\in (r_+, \infty)$ with $\lim_{r\to\infty} \tilde R_{D}(\omega,\ell,r) = \lim_{r\to\infty} \tilde R_{N}(\omega,\ell,r)= 1$,
		\item $\tilde R_{D}(\omega,\ell, 1/s),\tilde R_{N}(\omega,\ell, 1/s) $ extend analytically across $s=0$ ($r=\infty$),
		\item $C_3=0$ if $\alpha \notin \mathcal E := \{ \frac{ 1}{4}(9-k^2)\colon k \in \mathbb N\}$,
		\item  $  \tilde   R_{\mathcal H_R}$ is analytic for $r\in (r_-, \infty)$ satisfying $\tilde R_{\mathcal H_R}(\omega, \ell, r_+) = 1$,
		\item   $\tilde R_{\mathcal{CH}_R}, \tilde R_{\mathcal{CH}_L}$ are analytic for $r\in [r_-,r_+) $ and satisfy $\tilde R_{\mathcal{CH}_R}(\omega, \ell, r_-)=\tilde R_{\mathcal{CH}_L}(\omega, \ell, r_-)  =1.$
	\end{itemize}
\end{lemma}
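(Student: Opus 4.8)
The plan is to treat \eqref{eq:radialode1} as a linear second-order ODE with isolated regular singular points and to read off each of the five solutions from the classical Frobenius method (method of regular singular points). First I would clear denominators, rewriting \eqref{eq:radialode1} in the self-adjoint form
\begin{align}
\Delta \frac{\d^2 R}{\d r^2} + \Delta' \frac{\d R}{\d r} + \left( \frac{\omega^2 r^4}{\Delta} - \ell(\ell+1) + \frac{\alpha r^2}{l^2} \right) R = 0.
\end{align}
Since $(M,Q,l)\in\mathcal P$ is subextremal, $\Delta$ has simple zeros at $r_\pm$, so $\omega^2 r^4/\Delta$ has simple poles there and $r=r_\pm$ are regular singular points; after the substitution $s=1/r$ the point $r=\infty$ becomes a regular singular point as well. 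On each open subinterval avoiding the others the coefficients are real-analytic, in fact rational in $r$ and in $\omega$.

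Next I would compute the indicial equation at each singular point. At $r=r_+$, the leading balance $\Delta \approx \Delta'(r_+)(r-r_+) = 2\kappa_+ r_+^2 (r-r_+)$ reduces the equation to an Euler equation with indicial polynomial $s^2 + \omega^2/(4\kappa_+^2)$ and roots $s=\pm i\omega/(2\kappa_+)$; the same computation at $r_-$ gives $\pm i\omega/(2\kappa_-)$, and at infinity the balance $r^2R''+4rR'+\alpha R=0$ gives indicial roots $-\tfrac32\pm\sqrt{\tfrac94-\alpha}$. For each prescribed exponent I would invoke the Frobenius theorem to produce a solution of the asserted form with an analytic prefactor $\tilde R$ normalized to $1$ at the singular point (resp.\ $\to1$ at $\infty$). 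The only mechanism by which logarithms can enter is a nonzero integer difference of the two indicial roots.

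The key point is that $|\operatorname{Im}(\omega)|<\min(\kappa_+,\kappa_-)$ rules out resonances at the horizons: the Frobenius recursion for $R_{\mathcal H_R}$ has denominators $n(n-i\omega/\kappa_+)$ for $n\geq1$, which are nonzero precisely because $|\operatorname{Im}(\omega)|<\kappa_+$ forbids $\omega=-in\kappa_+$; the analogous denominators $n(n\mp i\omega/\kappa_-)$ govern $R_{\mathcal{CH}_L},R_{\mathcal{CH}_R}$ at $r_-$. Hence no logarithms occur at the horizons, and since these denominators are rational in $\omega$ and nonvanishing, the series converge locally uniformly and depend analytically on $\omega$, with $\tilde R(r_\pm)=1$. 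At infinity, by contrast, the root difference is $2\sqrt{\tfrac94-\alpha}$, a positive integer exactly when $\alpha=\tfrac14(9-k^2)\in\mathcal E$. Working in $s=1/r$, the larger $s$-exponent yields the purely-power Dirichlet solution $R_D$ with no logarithm, while the smaller $s$-exponent yields $R_N$, which in the resonant case acquires a term proportional to $\log(1/r)\,r^{-3/2-\sqrt{9/4-\alpha}}$; this is exactly the coefficient $C_3$ in \eqref{eq:rn}, which vanishes off $\mathcal E$.

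Finally, for the global analyticity intervals I would analytically continue: away from the singular points the ODE has analytic coefficients, so every solution continues analytically along $(r_-,r_+)$ and $(r_+,\infty)$; dividing by the analytic, nonvanishing prefactors shows $\tilde R_{\mathcal H_R}$ is analytic on $(r_-,r_+)\cup(r_+,\infty)$, while the Frobenius construction shows it is also analytic across $r_+$, giving analyticity on all of $(r_-,\infty)$. The statements for $\tilde R_{\mathcal{CH}_L},\tilde R_{\mathcal{CH}_R}$ on $[r_-,r_+)$ and for $\tilde R_D,\tilde R_N$ (and their analytic extension in $s$ across $s=0$) on $(r_+,\infty)$ follow identically, and linear independence of $R_D,R_N$ is immediate from their distinct leading exponents (and, on $\mathcal E$, from the logarithm present only in $R_N$). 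I expect the main obstacle to be the bookkeeping at infinity—tracking the Frobenius resonance through the $s=1/r$ substitution so the logarithm lands on $R_N$ with exponent $-3/2-\sqrt{9/4-\alpha}$ and identifying the resonance set precisely as $\mathcal E$—rather than any conceptual difficulty, the remainder being a textbook application of the method of regular singular points.
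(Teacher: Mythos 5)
Your proposal is correct and follows essentially the same route as the paper, which simply invokes a standard Frobenius/indicial analysis at the regular singular points $r_-,r_+,\infty$ (citing Dold's treatment) and notes that the bound $|\operatorname{Im}(\omega)|<\min(\kappa_+,\kappa_-)$ prevents logarithmic terms at the horizons while the resonance set $\mathcal E$ governs the logarithm at infinity. Your write-up in fact supplies more detail (the explicit indicial polynomials and recursion denominators) than the paper's proof does.
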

\begin{proof}
	The points $r\in \{ r_-, r_+, \infty\}$ constitute regular singular points of the radial o.d.e.\ \eqref{eq:radialode1}. A standard indicial analysis (see \cite[Section 2.2]{dold}) around the  points show  \eqref{eq:rd}--\eqref{eq:rchl}.   Note that the bound on $\operatorname{Im}(\omega)$ guarantees that the indicial roots at $r_+$ and $r_-$ do not differ by an integer. Thus, for the singular points $r_-$ and $r_+$, logarithmic terms do not occur. For more details we refer to \cite[Section 2.2]{dold}, see also  \cite{srgrwoingmodes,olver2014asymptotics}.
 \end{proof}
\subsection{The re-normalized radial o.d.e.\ and the scattering coefficients}
It is also useful to   write the radial o.d.e.\ \eqref{eq:radialode1} in the coordinate $r^\ast$. It will then take a standard Schrödinger form. 

More precisely,  $R(\omega,\ell,r) $ solves \eqref{eq:radialode1} on $\mathcal B$ (and respectively on $\mathcal R$) if and only if 
\begin{align}u(\omega,\ell,r^\ast) := r(r^\ast) R(\omega,\ell,r(r^\ast)) \end{align}
solves 
\begin{align}\label{eq:radialodeforu}
	- u'' + 	(V_\ell - \omega^2) =0,
\end{align}
where $' := \frac{\d }{\d r^\ast}$. The potential $V_\ell(r(r^\ast))$ is given by 
 \begin{align}V_\ell (r)= h(r)\left( r^{-1}  \frac{\d h(r) }{\d r} + \frac{\ell (\ell +1) }{r^2} - \frac{\alpha}{l^2}\right),
 \end{align}  where 
\begin{align}h(r):= \frac{\Delta(r) }{r^2}.\end{align} 

On $\mathcal B$ and for $\ell\in \mathbb Z_{\geq 0}$, $\omega \in \mathbb C  $ with $|\operatorname{Im}(\omega) |< \min(\kappa_+, \kappa_-)$ we define solutions to \eqref{eq:radialodeforu} as
\begin{align}\label{eq:defnuhra}
	&\uhr (\omega,\ell, r^\ast):=   \frac{r(r^\ast)}{r_+ e^{i \omega f_\ast(r_+)}}  R_{\mathcal H_R} (\omega,\ell,r(r^\ast)) \\ \label{eq:defnuchra}
	&\uchr (\omega,\ell, r^\ast):=   \frac{r(r^\ast)}{r_- e^{-i \omega g_\ast(r_-)}}   R_{\mathcal{CH}_R}  (\omega,\ell,r(r^\ast)) \\ \label{eq:defnuchla}
	&\uchl  (\omega,\ell, r^\ast):=  \frac{ r(r^\ast)}{r_- e^{i \omega g_\ast(r_-)}}   R_{\mathcal{CH}_L} ((\omega,\ell,r(r^\ast)) 
\end{align}
such that they have the asymptotics
\begin{align}\label{eq:asymptoticsuhr}
	&\uhr (\omega,\ell, r^\ast) = e^{-i\omega r^\ast} (1 + o_{\ell,\omega}(1)) \text{ as } r^\ast \to -\infty,   \\ \label{eq:asymptoticscuhr}
	&\uchr (\omega,\ell, r^\ast) = e^{i\omega r^\ast} (1 + o_{\ell,\omega}(1)) \text{ as } r^\ast \to  \infty,\\
	&\uchl  (\omega,\ell, r^\ast) = e^{-i\omega r^\ast} (1 + o_{\ell,\omega}(1))  \text{ as } r^\ast \to  \infty.
\end{align}
We   also define that the Wronskian  \begin{align}
	\mathfrak W[u_1,u_2](\omega,\ell,r^\ast) := u_1(\omega,\ell,r^\ast)  u_2' (\omega,\ell,r^\ast) - u_1'(\omega,\ell,r^\ast)  u_2(\omega,\ell,r^\ast)   \end{align}
 and note that  it does not depend on $r^\ast$ for two solutions $u_1,u_2$ of \eqref{eq:radialodeforu}, i.e.\
 \begin{align}
\frac{\d}{\d r^\ast} \mathfrak W[u_1,u_2](\omega,\ell,r^\ast)  =0
 \end{align}
for  two solutions $u_1,u_2$. Moreover, two solutions $u_1,u_2$ are linearly dependent if and only if their Wronskian vanishes, i.e.\ if and only if  $\mathfrak W[u_1,u_2](\omega,\ell)=0$.

\begin{lemma}\label{lem:defnofRandT}
	For $\omega \in \mathbb C\setminus \{ 0 \}$ with  $|\operatorname{Im}(\omega) | < \min(\kappa_+, \kappa_-)$, $\ell \in \mathbb Z_{\geq 0} $ there exist unique real-analytic transmission and reflection coefficients $\mathfrak T(\omega,\ell) $ and $\mathfrak R(\omega,\ell)$ such that
	\begin{align}
	&\uhr(\omega, \ell, r^\ast) = \mathfrak R (\omega,\ell) \uchr (\omega, \ell, r^\ast)+ \mathfrak T (\omega,\ell) \uchl (\omega, \ell, r^\ast). \label{eq:defnoftandr2}
	\end{align}
Equivalently, 
\begin{align}\label{eq:twronskian}
& \mathfrak T(\omega,\ell):= \frac{ \mathfrak W[\uhr, \uchr](\omega,\ell)}{\mathfrak W[\uchl, \uchr](\omega,\ell)}= \frac{ \mathfrak W[\uhr, \uchr](\omega,\ell)}{2i \omega } ,\\
&\mathfrak R(\omega,\ell):= \frac{ \mathfrak W[\uhr, \uchl](\omega,\ell)}{\mathfrak W[\uchr, \uchl](\omega,\ell)}= \frac{ \mathfrak W[\uhr, \uchl](\omega,\ell)}{-2i \omega }.\label{eq:rwronskian}
\end{align}
\begin{proof}
For $\omega \neq 0$, the solutions  $\uchl $ and $\uchr$   are linearly independent and   form a fundamental pair. Thus, there exist unique coefficients $\mathfrak T(\omega,\ell) $ and $\mathfrak R(\omega,\ell)$ such that \eqref{eq:defnoftandr2} holds. The   identities \eqref{eq:twronskian}, \eqref{eq:rwronskian} follow now from applying the Wronskian to \eqref{eq:defnoftandr2}.
\end{proof}
\end{lemma}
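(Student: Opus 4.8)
The plan is to treat the renormalized radial equation \eqref{eq:radialodeforu} as a second-order linear ODE, whose solution space is two-dimensional, and to show that for $\omega \neq 0$ the pair $\{\uchl, \uchr\}$ is a fundamental system. First I would verify their linear independence by evaluating the ($r^\ast$-independent) Wronskian $\mathfrak W[\uchl, \uchr]$ in the limit $r^\ast \to \infty$. Using the asymptotics \eqref{eq:asymptoticscuhr} for $\uchr$ together with the companion expansion $\uchl = e^{-i\omega r^\ast}(1 + o_{\ell,\omega}(1))$, the leading-order terms give $\mathfrak W[\uchl, \uchr] = 2i\omega$, which is nonzero precisely when $\omega \neq 0$. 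Hence $\{\uchl, \uchr\}$ is a basis of the solution space.

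Given this basis, existence and uniqueness of $\mathfrak R(\omega,\ell)$ and $\mathfrak T(\omega,\ell)$ in \eqref{eq:defnoftandr2} is immediate, since the solution $\uhr$ of \eqref{eq:radialodeforu} admits a unique expansion in $\{\uchr, \uchl\}$. To read off the explicit formulas \eqref{eq:twronskian}--\eqref{eq:rwronskian}, I would apply $\mathfrak W[\,\cdot\,, \uchr]$ to both sides of \eqref{eq:defnoftandr2}; because $\mathfrak W[\uchr, \uchr] = 0$ the reflection term drops out, leaving $\mathfrak W[\uhr, \uchr] = \mathfrak T\, \mathfrak W[\uchl, \uchr] = 2i\omega\, \mathfrak T$, which is exactly the stated expression for $\mathfrak T$. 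The symmetric computation with $\mathfrak W[\,\cdot\,, \uchl]$, together with the antisymmetry $\mathfrak W[\uchr, \uchl] = -\mathfrak W[\uchl, \uchr] = -2i\omega$, yields the formula for $\mathfrak R$.

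For the real-analyticity of $\mathfrak T$ and $\mathfrak R$ I would invoke the analytic dependence of the fundamental solutions $\uhr, \uchr, \uchl$ on the parameters $(\omega, \ell)$, which follows from \cref{lem:defnrs} and the analytic normalizations in \eqref{eq:defnuhra}--\eqref{eq:defnuchla} together with the standard theorem on analytic dependence of ODE solutions on parameters entering the coefficients analytically. Since the Wronskians are bilinear in these solutions and their first derivatives, they inherit this analyticity, and dividing by $2i\omega \neq 0$ preserves it on $\mathbb C \setminus \{0\}$.

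The step I expect to require the most care is the evaluation $\mathfrak W[\uchl, \uchr] = 2i\omega$: the asymptotics \eqref{eq:asymptoticscuhr} are stated at the level of the functions, so to pass to the limit in the Wronskian one must also control the first derivatives of the $o_{\ell,\omega}(1)$ remainders. This is, however, not a genuine obstacle, since the potential $V_\ell$ vanishes at $r = r_-$ and decays exponentially in $r^\ast$ as $r^\ast \to \infty$ (because $\Delta(r_-) = 0$ forces $h(r_-)=0$); standard ODE asymptotics then upgrade the convergence of $\uchl$ and $\uchr$ to $e^{-i\omega r^\ast}$ and $e^{i\omega r^\ast}$ to the corresponding convergence of their derivatives, legitimizing the limit computation.
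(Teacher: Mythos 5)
Your proposal follows essentially the same route as the paper: use that $\{\uchl,\uchr\}$ is a fundamental pair for $\omega\neq 0$ (via $\mathfrak W[\uchl,\uchr]=2i\omega$), deduce existence and uniqueness of the expansion of $\uhr$, and read off \eqref{eq:twronskian}--\eqref{eq:rwronskian} by applying the Wronskian to \eqref{eq:defnoftandr2}. The paper's proof is just a terser version of this; your additional care about controlling derivatives in the asymptotics and about analytic dependence on $\omega$ is correct and fills in details the paper leaves implicit.
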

 \subsection{Lower bound on transmission coefficient}
 \label{subsec:lowerboundont}
 The main technical ingredient of the proof of \cref{thm:maintheorem} is a lower bound on the transmission coefficient $\mathfrak T(\omega,\ell)$ at the quasinormal mode frequencies $\omega=\omega_\ell$. This is proved in the following via suitable approximations of solutions to \eqref{eq:radialodeforu}.
\begin{lemma}\label{eq:transmissioncoefnotzero}
The transmission coefficient satisfies \begin{align}\label{eq:lowerboundontT}
|\mathfrak T(\omega = \omega_\ell,\ell)| \gtrsim 1 
\end{align}
for $\ell$ sufficiently large, where $\omega_\ell$ are as in \cref{thm:existenceofqnm} satisfying \eqref{eq:reomegal} and \eqref{eq:imomegal}.
\begin{proof}
We recall the radial o.d.e.\ in the interior 
\begin{align}\label{eq:recallradialode}
- u'' (\omega_\ell,\ell,r^\ast)   + (V_\ell (r^\ast)-\omega_\ell^2) u (\omega_\ell,\ell,r^\ast) = 0,
\end{align}
where  
 $\operatorname{Re}(\omega_\ell) \sim \ell$ and $0<-\operatorname{Im}(\omega_\ell) \lesssim e^{-c_\textup{I} \ell }$ from  \cref{thm:existenceofqnm}. In the following we will estimate the absolute value of the Wronskian $
|  \mathfrak W[u_{\mathcal{H}_R} , u_{\mathcal{CH}_R} ](\omega_\ell,\ell)|$ from below.

In view of the asymptotics of $u_{\mathcal H_R}$ and $u_{\mathcal{CH}_R}$ in \eqref{eq:asymptoticsuhr} and \eqref{eq:asymptoticscuhr}, we can equivalently define $u_{\mathcal H_R} (\omega_\ell,\ell,r^\ast) $ and $u_{\mathcal{CH}_R}(\omega_\ell,\ell,r^\ast) $ to be the unique solutions   to the Volterra integral equations 
\begin{align}\label{eq:defnuhr}
u_{\mathcal H_R} (\omega_\ell,\ell,r^\ast)= e^{-i \omega_\ell r^\ast} + \int_{-\infty}^{r^\ast} \frac{\sin(\omega_\ell (r^\ast - y) )}{\omega_\ell} V_\ell(y) u_{\mathcal H_R} (\omega_\ell, \ell, y) \d y, \\
u_{\mathcal{CH}_R} (\omega_\ell,\ell,r^\ast)   = e^{i \omega_\ell r^\ast} - \int_{r^\ast}^{\infty }\frac{\sin(\omega_\ell (r^\ast - y) )}{\omega_\ell} V_\ell(y) u_{\mathcal{CH}_R} (\omega_\ell, \ell, y) \d y\label{eq:defnuchr}
\end{align}
which converge for $\ell$ sufficiently large (i.e.\ $|\operatorname{Im}(\omega_\ell)|$ sufficiently small) as the potential satisfies 
\begin{align}|\label{eq:estonvell}
	V_\ell(y)| \lesssim \ell^2 e^{- 2  \min(\kappa_+,  \kappa_- ) |y| }.
\end{align}
That indeed  solutions to \eqref{eq:defnuhr} and \eqref{eq:defnuchr} exist follows from the standard theory of Volterra equations (e.g.\ \cite[Chapter~6, \S10]{olver2014asymptotics} or \cite[Proposition 2.3]{kehle2018scattering}) and the estimate \eqref{eq:estonvell}; see also the quantitative analysis in Region I below.

 In view of $\operatorname{Im}(\omega_\ell)  < 0$, note that $u_{\mathcal H_R}$ and $u_{\mathcal{CH}_R}$ blow up  as $r^\ast \to -\infty$ and $r^\ast \to \infty$, respectively. 

We will consider three different regions which are separated by 
\begin{align}R_1:= \frac{1}{\operatorname{Im({\omega_\ell})}}, R_2 := -  \frac{1}{\operatorname{Im({\omega_\ell})}}.\end{align}
\paragraph{Region I: $-\infty< r^\ast \leq R_1$}
In this region we define 
\begin{align}v_{\mathcal H_R } (\omega_\ell,\ell,r^\ast) := e^{i \omega_\ell r^\ast} \uhr(\omega_\ell,\ell,r^\ast) -1\end{align} and note that $v_{\mathcal H_R}$ solves 
\begin{align}\nonumber
 v_{\mathcal H_R} (\omega_\ell,\ell,r^\ast)=  & \int_{-\infty}^{r^\ast} \frac{\sin(\omega_\ell (r^\ast - y) )}{\omega_\ell} e^{i \omega_\ell (r^\ast - y)} V_\ell (y)  \d y \\
 & +  \int_{-\infty}^{r^\ast} \frac{\sin(\omega_\ell (r^\ast - y) )}{\omega_\ell} e^{i \omega_\ell (r^\ast - y)} V_\ell(y) v_{\mathcal H_R} (\omega_\ell, \ell, y) \d y. \label{eq:vhrvolterra}
\end{align}
For the kernel we have the estimate 
\begin{align} \nonumber
\int_{-\infty}^{R_1}  \sup_{r^\ast \in (y, R_1)} \left| \frac{\sin(\omega_\ell( r^\ast - y) )  }{\omega_\ell} e^{i \omega_\ell (r^\ast -y)} V_\ell(y) \right|   \d y &  \lesssim \int_{-\infty}^{R_1}  e^{-2\operatorname{Im}(\omega_\ell) (R_1 -y) } | \omega_\ell^{-1} | \ell^2 e^{   2  \min(\kappa_+,  \kappa_- )  y }  \d y \\
\lesssim \int_{-\infty}^{R_1}    \ell e^{ 2(   \operatorname{Im}(\omega_\ell) +  \min(\kappa_+,  \kappa_- )  ) y } &  \d y 
  \lesssim    e^{ \min(\kappa_+,  \kappa_- ) R_1} \lesssim    e^{- \tilde c  e^{c_{\operatorname{I} \ell } }} \lesssim e^{- \tilde c \ell} 
\end{align}
for some $\tilde c >0$ only depending on the black hole parameters. Here, we also used that \begin{align}\ell e^{ \min(\kappa_+,  \kappa_- ) R_1} \lesssim  \ell e^{-  \tilde c  e^{c_I\ell} }\lesssim 1.\end{align} 
Then, from standard estimates on Volterra integral equations (e.g.\ \cite[Chapter~6, \S10]{olver2014asymptotics} or \cite[Proposition 2.3]{kehle2018scattering}) we obtain 
\begin{align} \label{eq:v1}
&\|v_{\mathcal H_R}\|_{L^\infty( -\infty, R_1)}(\omega_\ell,\ell) \lesssim e^{- \tilde c \ell} ,\\
&\|v'_{\mathcal H_R}\|_{L^\infty(-\infty, R_1)} (\omega_\ell,\ell) \lesssim e^{- \tilde c \ell} . \label{eq:v2} 
\end{align}
\paragraph{Region II: $R_1 \leq r^\ast \leq R_2$}
In this region we use a WKB approximation. We define the complex potential  \begin{align}f(r^\ast):=   V_\ell(r^\ast) - \omega_\ell^2 =  V_\ell(r^\ast) - \operatorname{Re}(\omega_\ell)^2 + \operatorname{Im}(\omega_\ell)^2 - 2i \operatorname{Re}(\omega_\ell) \operatorname{Im} (\omega_\ell) .\end{align}
For the fractional power we choose the principal branch with $\arg \in (-\pi, \pi]$. 
We remark that $\operatorname{Im} (f)  > 0$ as $\operatorname{Re}(\omega_\ell)> 0 $ and $ \operatorname{Im}(\omega_\ell) <0$ and thus, $\operatorname{Re} ( \sqrt{f} ), \operatorname{Im} ( \sqrt{f} ) >0$. We also note that $\mathbb R\ni r^\ast \mapsto f(r^\ast)$ is real analytic as   $V(r)$  and $r(r^\ast)$ are real analytic. Thus, $f(r^\ast)$ extends holomorphically to a complex neighborhood  of the real line on which $f$ manifestly does not vanish. 
We will now apply standard results on WKB approximation with complex potentials (e.g.\   \cite[Chapter~6, Theorem 11.1]{olver2014asymptotics} or \cite[p.~29]{bsbook}). Writing \begin{align}\nonumber 
	\int_{R_1}^{r^\ast} \sqrt{f(y)  } \d y & =  i \omega_\ell (  r^\ast -R_1) + i \omega_\ell   \int_{R_1}^{r^\ast}  \sqrt{ 1 - \frac{V_\ell (y)}{\omega_\ell^2} } -1 \d y \\ & =  -i \omega_\ell R_1 +  i \omega_\ell   r^\ast + i \omega_\ell   \int_{R_1}^{r^\ast}  \sqrt{ 1 - \frac{V_\ell (y)}{\omega_\ell^2} } -1 \d y\end{align}
we use  \cite[Chapter~6, Theorem 11.1]{olver2014asymptotics}  to obtain two linear independent solutions  with reference points $R_2$ for $u_{WKB_1}$ and $R_1$ for  $u_{WKB_2} $ which are of the form
\begin{align}\label{eq:approxofuhr}
&u_{WKB_1} (\omega_\ell, \ell, r^\ast)= \frac{\sqrt{\omega_\ell}  }{f^{\frac 14} (r^\ast) } e^{ - i  \omega_\ell r^\ast} e^{-i \omega_\ell \int_{R_1}^{r^\ast} \left( \sqrt{1-\frac{V_\ell(y)}{\omega_\ell^2} }-  1 \right)  \d y } (1 + \eta_{WKB_1}  (\omega,\ell, \ell, r^\ast) ),\\
& u_{WKB_2}  (\omega_\ell, \ell, r^\ast)= \frac{\sqrt{\omega_\ell} }{f^{\frac 14}  (r^\ast)} e^{  i  \omega_\ell r^\ast} e^{ i \omega_\ell \int_{R_1}^{r^\ast} \left( \sqrt{1-\frac{V_\ell(y)}{\omega_\ell^2} }-  1 \right)  \d y } (1 + \eta_{WKB_2}  (\omega_\ell, \ell, r^\ast) )
\end{align}
where 
 \begin{align}&\eta_{WKB_1} (\omega_\ell, \ell, R_2)  = \eta'_{WKB_1}(\omega_\ell, \ell , R_2) =0, \\
 	& \eta_{WKB_2}(\omega_\ell, \ell, R_1)  = \eta'_{WKB_2}(\omega_\ell, \ell, R_1)    =0\end{align}  
 and
\begin{align} & \sup_{r^\ast \in (R_1,R_2)}( | \eta_{WKB_1} | + | f^{-\frac 12 } \eta'_{WKB_1}| )(\omega_\ell, \ell)\lesssim \exp(|F|(\omega_\ell, \ell)) -1,\\
& \sup_{r^\ast \in (R_1,R_2)} ( | \eta_{WKB_1} | + | f^{-\frac 12 } \eta'_{WKB_1}| )(\omega_\ell, \ell) \lesssim \exp(|F|(\omega_\ell, \ell)) -1 \end{align}  for 
\begin{align}F (\omega_\ell, \ell) = \int_{R_1}^{R_2}   \left| \frac{1}{f^{\frac 14}}\frac{\d^2}{\d {r^\ast}^2}  	f^{-\frac 14}  \right| \d r^\ast.
\end{align}

 We shall note that the reference points $R_2$ for  $u_{WKB_1} $ and the reference point $R_1$ for  $u_{WKB_2} $ are valid as $\operatorname{Re}	\int_{R_1}^{\tilde r^\ast} \sqrt{f(y)  } \d y$ is nondecreasing from $R_1$ to $r^\ast$ and vice versa $\operatorname{Re}	\int_{R_1}^{\tilde r^\ast} \sqrt{f(y)  } \d y$ is nonincreasing  from $R_2$ to $r^\ast$, i.e.\ \cite[Chapter~6, Theorem 11.1, conditions (i), (ii)]{olver2014asymptotics} are fulfilled. 
 
 Now, we use \begin{align} \label{eq:decayforq} 
&	\left|\frac{\d}{\d r^\ast } f(r^\ast)\right|, 	\left|\frac{\d^2}{\d {r^\ast}^2 } f(r^\ast)\right|\lesssim\ell^2  e^{- 2  \min(\kappa_+,  \kappa_- ) |r^\ast|},  \text{ and } 	|f^{\frac 14}| \gtrsim \ell^{\frac 12} 
\end{align}
 to obtain the quantitative bound 
\begin{align}
 \int_{R_1}^{R_2}  \left| \frac{1}{f^{\frac 14}}\frac{\d^2}{\d {r^\ast}^2}  	f^{-\frac 14} \right|  \d r^\ast \lesssim \frac{1}{\ell}
\end{align}
and hence,
\begin{align}
 \sup_{r^\ast \in [R_1,R_2]}( |\eta_{WKB_2}| + | f^{-\frac 12 }\eta'_{WKB_2} | + |\eta_{WKB_1}| + | f^{-\frac 12 }\eta'_{WKB_1} |)  (\omega_\ell, \ell) \lesssim \frac{1}{\ell}. \label{eq:estimatesoneta}
\end{align}
In particular, this shows that 
\begin{align}|\mathfrak W[u_{WKB_1},u_{WKB_2}]|(\omega_\ell, \ell) \sim |\omega_\ell| \sim \ell \end{align} by evaluating the Wronskian at $R_1$ and using the bounds of \eqref{eq:estimatesoneta}.

We further decompose 
\begin{align}\uhr (\omega_\ell,\ell,r^\ast) = A_{WKB_1} (\omega_\ell, \ell) u_{WKB_1}  (\omega_\ell,\ell,r^\ast)+ B_{WKB_2}(\omega_\ell, \ell) u_{WKB_2} (\omega_\ell,\ell,r^\ast)
\label{eq:expansionofuhr}
\end{align}
for \begin{align}
& A_{WKB_1}  (\omega_\ell, \ell)= \frac{ \mathfrak W [\uhr, u_{WKB_2} ] (\omega_\ell, \ell)}{\mathfrak W[u_{WKB_1},u_{WKB_2}](\omega_\ell, \ell)}\\
& B_{WKB_2} (\omega_\ell, \ell) = \frac{ \mathfrak W [ \uhr, u_{WKB_1} ] (\omega_\ell, \ell)}{\mathfrak W[u_{WKB_2},u_{WKB_1}] (\omega_\ell, \ell)}
\end{align}
which satisfy 
\begin{align}
&| A_{WKB_1} (\omega_\ell, \ell) |\sim \ell^{-1} |\mathfrak W [ \uhr, u_{WKB_2} ](\omega_\ell,\ell, R_1) |  \sim\ell^{-1} |\omega_\ell| \sim  1 \label{eq:estimonawkb1}\\
&| B_{WKB_2} (\omega_\ell, \ell) |\sim \ell^{-1} |\mathfrak W [ \uhr, u_{WKB_1} ] (\omega_\ell,\ell,R_1) |    \lesssim e^{-\tilde c_2 \ell}   \label{eq:estimonawkb2}
\end{align}
for some $\tilde c_2 >0$ in view of \eqref{eq:v1}, \eqref{eq:v2} and the WKB bounds of \eqref{eq:estimatesoneta}.

\paragraph{Region III: $R_2 \leq r^\ast < \infty$}
This region is completely analogous to Region I. Indeed,  analogous to \eqref{eq:estimonawkb1} and \eqref{eq:estimonawkb2} we have 
\begin{align}
& |\mathfrak W [ u_{WKB_1} , \uchr ] (\omega_\ell, \ell) | =  |\mathfrak W [u_{WKB_1} , \uchr ] (\omega_\ell,\ell, R_2) |  \sim \ell     ,\\
& |\mathfrak W[  u_{WKB_2}, \uchr ] (\omega_\ell, \ell)|= |\mathfrak W[ u_{WKB_2}, \uchr ](\omega_\ell,\ell, R_2) |\lesssim e^{-\tilde c_3\ell}   
\end{align}
for some $\tilde c_3 >0$. 

Now, using \eqref{eq:expansionofuhr} we expand
 \begin{align} \nonumber
 	\mathfrak W[\uhr, \uchr]  (\omega_\ell, \ell) &= A_{WKB_1}  (\omega_\ell, \ell) \mathfrak W [ u_{WKB_1} , \uchr ] (\omega_\ell, \ell) 
 	\\
 	&+B_{WKB_2}   (\omega_\ell, \ell)\mathfrak W[ u_{WKB_2}, \uchr ] (\omega_\ell, \ell).\label{eq:decomposition}
 \end{align}
From \eqref{eq:decomposition} and putting everything together we arrive at
\begin{align}
| 	\mathfrak W[\uhr, \uchr] | (\omega_\ell, \ell) \sim \ell 
\end{align} 
which shows \eqref{eq:lowerboundontT} as $\ell \sim |\omega_\ell|$.
\end{proof}
\end{lemma}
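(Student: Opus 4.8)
The plan is to bound the modulus of the Wronskian $\mathfrak{W}[\uhr,\uchr](\omega_\ell,\ell)$ from below by $\sim \ell$, since by \eqref{eq:twronskian} we have $\mathfrak{T} = \mathfrak{W}[\uhr,\uchr]/(2i\omega_\ell)$ and $|\omega_\ell|\sim\ell$, so such a bound immediately yields $|\mathfrak{T}|\gtrsim 1$. The Wronskian is $r^\ast$-independent, so I am free to evaluate it wherever the analysis is cleanest, and I would exploit this by splitting the real $r^\ast$-line into three regions separated by $R_1 = 1/\operatorname{Im}(\omega_\ell)$ and $R_2 = -1/\operatorname{Im}(\omega_\ell)$. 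These thresholds are chosen so that in the outer regions the potential $V_\ell$, which decays like $\ell^2 e^{-2\min(\kappa_+,\kappa_-)|r^\ast|}$ by \eqref{eq:estonvell}, is overwhelmingly small (the exponential beats the factor $\ell^2$ because $|R_j|\sim e^{c_I\ell}$ is enormous), while in the large middle region the frequency $\omega_\ell$ dominates $V_\ell$ so that a WKB approximation is valid.

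The key steps, in order, would be as follows. \textbf{Outer regions (I and III).} Here I would represent $\uhr$ and $\uchr$ as solutions of the Volterra integral equations \eqref{eq:defnuhr}, \eqref{eq:defnuchr} adapted to the asymptotics \eqref{eq:asymptoticsuhr}, \eqref{eq:asymptoticscuhr}. Factoring out the free oscillation $e^{\mp i\omega_\ell r^\ast}$, the correction $v_{\mathcal{H}_R}$ solves a Volterra equation whose kernel has $L^1$-norm bounded by $e^{-\tilde c\,\ell}$; the standard Volterra estimates then give $\|v_{\mathcal{H}_R}\|_{L^\infty}, \|v'_{\mathcal{H}_R}\|_{L^\infty}\lesssim e^{-\tilde c\ell}$ on $(-\infty,R_1]$, and symmetrically for $\uchr$ on $[R_2,\infty)$. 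This says that to exponential accuracy $\uhr$ is a \emph{pure} outgoing/ingoing exponential at the thresholds. \textbf{Middle region (II).} I would invoke the rigorous WKB theorem (Olver, Chapter~6, Theorem~11.1) for the complex potential $f=V_\ell-\omega_\ell^2$, noting $\operatorname{Im}(f)>0$ since $\operatorname{Re}(\omega_\ell)>0$ and $\operatorname{Im}(\omega_\ell)<0$, so $f$ does not vanish and the fractional powers are unambiguous. The crucial control is the WKB error integral $F = \int_{R_1}^{R_2}|f^{-1/4}(f^{-1/4})''|\,\d r^\ast$, which I expect to bound by $\lesssim 1/\ell$ using $|f^{1/4}|\gtrsim \ell^{1/2}$ together with the derivative bounds \eqref{eq:decayforq}; this gives two independent WKB solutions $u_{WKB_1}, u_{WKB_2}$ with errors $\eta_{WKB_j}\lesssim 1/\ell$ and Wronskian $\sim|\omega_\ell|\sim\ell$. \textbf{Matching.} I would expand $\uhr = A\,u_{WKB_1}+B\,u_{WKB_2}$ via Wronskian ratios and evaluate them at $R_1$; because at $R_1$ the exponentially-accurate outer expansion makes $\uhr$ a pure exponential matching $u_{WKB_1}$, I get $|A|\sim 1$ and $|B|\lesssim e^{-\tilde c_2\ell}$. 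The analogous computation in Region III at $R_2$ gives $|\mathfrak{W}[u_{WKB_1},\uchr]|\sim\ell$ and $|\mathfrak{W}[u_{WKB_2},\uchr]|\lesssim e^{-\tilde c_3\ell}$. Finally, decomposing $\mathfrak{W}[\uhr,\uchr] = A\,\mathfrak{W}[u_{WKB_1},\uchr]+B\,\mathfrak{W}[u_{WKB_2},\uchr]$ as in \eqref{eq:decomposition}, the dominant first term is $\sim\ell$ while the second is exponentially suppressed, yielding $|\mathfrak{W}[\uhr,\uchr]|\sim\ell$ and hence \eqref{eq:lowerboundontT}.

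The main obstacle I anticipate is verifying the hypotheses of the WKB theorem uniformly in $\ell$ across the very long interval $[R_1,R_2]$, whose length $\sim e^{c_I\ell}$ grows exponentially. One must check not only that the error integral $F$ stays $O(1/\ell)$ despite this enormous length, but also that the monotonicity/progressive-path conditions (Olver's conditions (i), (ii)) on $\operatorname{Re}\int\sqrt{f}$ hold so that the reference points $R_2$ for $u_{WKB_1}$ and $R_1$ for $u_{WKB_2}$ are admissible. The saving grace is that the potential decays exponentially while the frequency is large, so $\sqrt{f}\approx i\omega_\ell(1-V_\ell/(2\omega_\ell^2))$ and the integrand in $F$ carries two factors from differentiating $f^{-1/4}$, each producing a factor of $V_\ell'$ or $(V_\ell')^2$ that decays exponentially in $|r^\ast|$ and integrates to something small; the length of the interval is harmless precisely because the integrand is concentrated near the origin. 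Keeping the constants uniform and confirming the monotonicity of $\operatorname{Re}(\sqrt{f})$ — which follows from $\operatorname{Re}(\sqrt{f})>0$ established via $\operatorname{Im}(f)>0$ — is the technical heart of the argument.
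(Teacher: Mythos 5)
Your proposal is correct and follows essentially the same route as the paper: the identical three-region decomposition at $R_1=1/\operatorname{Im}(\omega_\ell)$, $R_2=-1/\operatorname{Im}(\omega_\ell)$, Volterra estimates giving exponentially small corrections in the outer regions, Olver's complex-potential WKB theorem with error integral $F\lesssim 1/\ell$ in the middle, and the Wronskian matching $\mathfrak{W}[\uhr,\uchr]=A\,\mathfrak{W}[u_{WKB_1},\uchr]+B\,\mathfrak{W}[u_{WKB_2},\uchr]$ with the first term $\sim\ell$ dominating. Your identification of the technical heart (uniformity of the WKB error over the exponentially long interval and the admissibility of the reference points via monotonicity of $\operatorname{Re}\int\sqrt{f}$) matches the paper's own emphasis.
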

\section{Proof of the main result}

\label{sec:proofofmaintheorem}

Before we prove our main theorem, we show that the quasinormal modes of \cref{thm:existenceofqnm} extend smoothly to the interior and   arise from smooth initial data.

\begin{prop}\label{prop:qnmextendssmoothly}
	The quasinormal modes defined in \cref{thm:existenceofqnm} extend smoothly to $\mathcal B \cup \mathcal H_R \cup \mathcal R$  as solutions to \eqref{eq:waveequation}. In outgoing Eddington--Finkelstein coordinates $(v,r,\theta,\varphi)$ they have the form
	\begin{align}
		\psi_\ell (v,r,\theta,\varphi) = e^{-i \omega_\ell v } e^{i \omega_\ell f_\ast(r)} \tilde R_{\mathcal H_R}(\omega_\ell,\ell, r) Y_{\ell,0}(\theta),
	\end{align}
	where $\tilde R_{\mathcal H_R}(\omega_\ell,\ell, r)$ is as in \cref{lem:defnrs} and $f_\ast$ is as in \eqref{eq:firsttimefastarises}. Moreover, $\psi_\ell\vert_{D^+(\Sigma)}$ can be considered as arising from smooth data  $(\Psi_0,\Psi_1):=  (\psi_\ell \vert_\Sigma, \hat n_{\Sigma} \psi_\ell|_{\Sigma}) \in C^\infty(\Sigma)\times C^\infty(\Sigma)\cap \underline H_0^1(\Sigma) \times \underline L^2(\Sigma)$ posed on $\Sigma$.
	\begin{proof}
		Let $R_\ell(r)$ be the radial part of the quasinormal mode from  \cref{thm:existenceofqnm}. Then, $R_\ell$ is a solution to \eqref{eq:radialode1} and satisfies the boundary condition \ref{item1}  of  \cref{thm:existenceofqnm} at $r=r_+$. Thus, we  have that     $\tilde R_\ell(r) = \tilde R_{{\mathcal H}_R}(\omega_\ell,\ell,r) $ as well as  $R_\ell(r) =  R_{{\mathcal H}_R}(\omega_\ell,\ell,r)$ from which we obtain that 
		\begin{align}\label{eq:expressiontoextend}
			\psi_\ell (v,r ,\theta,\varphi) = e^{-i \omega_\ell v }e^{i \omega_\ell f_\ast(r)}  \tilde R_{{\mathcal H}_R}(\omega_\ell, \ell, r) Y_{\ell,0}(\theta) 
		\end{align}
		for $r>r_+$. We now note that we can extend  \eqref{eq:expressiontoextend} smoothly (in fact analytically)  to  $r_- < r\leq r_+$  as a solution to \eqref{eq:waveequation}. 
			Doing so, we obtain that $\psi_\ell$ is smooth (in fact analytic) on $\mathcal B \cup \mathcal H_R \cup \mathcal R$, such that in particular \begin{align}(\Psi_0,\Psi_1):=  (\psi_\ell \vert_\Sigma, \hat n_{\Sigma} \psi_\ell|_{\Sigma}) \in C^\infty(\Sigma)\times C^\infty(\Sigma).\end{align}
			
	We are left to show $(\Psi_0,\Psi_1) \in \underline H^1_0(\Sigma) \times \underline L^2(\Sigma)$ for which it suffices to consider the asymptotic region $\Sigma\cap \{ r\geq R\}= \{t=0\}\cap \{ r\geq R\}$. We first consider $\Psi_1$ and note that \begin{align}|\Psi_1 |^2 =| \hat n_{\Sigma} \psi_\ell |^2= |r  \sqrt{\frac{r^2}{\Delta}} \partial_t   \psi_\ell |^2 \sim | (- i \omega_\ell) \psi_\ell |^2 \sim |\omega_\ell|^2 |\psi_\ell|^2  \end{align} for $r\geq R$. 
			Hence,  \begin{align}
			|\Psi_1 |^2 r^{-1} \dvol_{\Sigma} \sim  |\omega_\ell|^2 |\psi_\ell|^2 \d r \d \sigma_{\mathbb S^2} \sim  |\omega_\ell|^2 r^{-3 - 2 \sqrt{\frac 94 - \alpha} } |\tilde R_{D}(\omega_\ell ,\ell,r)|^2 |Y_{\ell, 0 }|^2 \d r \d \sigma_{\mathbb S^2}
		\end{align}
		for $r\geq R$. Here, we have used that $\lim_{r\to\infty} r^{\frac 32 - \sqrt{\frac 94 - \alpha} }R_\ell (  r) = 0$ from   \cref{thm:existenceofqnm} such that \cref{lem:defnrs} implies that 
		\begin{align}R_\ell ( r) = \lambda R_{D}(\omega_\ell,\ell,r)\end{align} for some constant $\lambda \neq 0$.
		Thus, $\| \Psi_1 \|_{\underline L^2(\Sigma) } < \infty$ as $\lim_{r\to\infty} \tilde R_D (\omega_\ell,\ell,r) =1$.  
		
		Toward  $ \|\Psi_0 \|_{\underline H_0^1(\Sigma)} < \infty$ we first consider the radial term of $|\tilde \nabla \psi_\ell|^2 r \dvol_{\Sigma}$  for $r\geq R$:
		\begin{align} \nonumber 
			g^{rr} f^2 | \partial_r (f^{-1} \psi_\ell) |^2  r \dvol_{\Sigma} & \sim r^2 r^{-3+ 2 \sqrt{\frac 94 - \alpha}} |\partial_r (r^{   -2 \sqrt{\frac 94 - \alpha}}  \tilde R_{D})    |^2 |Y_{\ell, 0}|^2 r^2 \d r \d \sigma_{\mathbb S^2}\\
			& \lesssim (|\tilde R_D|^2 +  r^2 | \partial_r \tilde R_D|^2) r^{-1 - 2\sqrt{\frac 94 - \alpha}} |Y_{\ell, 0}|^2 \d r \d \sigma_{\mathbb S^2}.\label{eq:estimateongrr}
		\end{align}
		Now, from \cref{lem:defnrs} note that $\tilde R_D$ and $r^2 \partial_r \tilde R_D =  r^2 \frac{\d s}{\d r} \partial_s \tilde R_D= - \partial_s \tilde R_D$ for $s=\frac 1r$ remain bounded at $r=\infty$ from which we obtain that \eqref{eq:estimateongrr} is integrable. Similarly, we also obtain that the angular derivatives are integrable and we conclude that $ \|\Psi_0 \|_{\underline H_0^1(\Sigma)} < \infty$.
	\end{proof}
\end{prop}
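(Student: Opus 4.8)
The plan is to prove the two assertions of the proposition separately: first the smooth extension of the mode $\psi_\ell$ across $\mathcal H_R$ into the interior $\mathcal B$ as a genuine solution of \eqref{eq:waveequation}, and then the membership of the induced Cauchy data in the required function spaces.

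For the extension, I would start from the exterior mode $\psi_\ell = e^{-i\omega_\ell t} R_\ell(r) Y_{\ell,0}(\theta)$ of \cref{thm:existenceofqnm} and identify its radial part with the fundamental solution $R_{\mathcal H_R}$ of \cref{lem:defnrs}. Boundary condition~\ref{item1} says $R_\ell = (r-r_+)^{-i\omega_\ell/2\kappa_+}\tilde R_\ell$ with $\tilde R_\ell(r_+)=1$, which by the normalization in \eqref{eq:rhr1} and the simplicity of the event-horizon indicial root forces $R_\ell = R_{\mathcal H_R}(\omega_\ell,\ell,\cdot)$ and $\tilde R_\ell = \tilde R_{\mathcal H_R}(\omega_\ell,\ell,\cdot)$ on $(r_+,\infty)$. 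Passing to the ingoing Eddington--Finkelstein coordinate $v = t + r^\ast$ and using the second representation in \eqref{eq:rhr1}, the singular factors $e^{\pm i\omega_\ell r^\ast}$ cancel and one is left with
\begin{align*}
\psi_\ell(v,r,\theta,\varphi) = e^{-i\omega_\ell v}\, e^{i\omega_\ell f_\ast(r)}\, \tilde R_{\mathcal H_R}(\omega_\ell,\ell,r)\, Y_{\ell,0}(\theta).
\end{align*}
Since \cref{lem:defnrs} furnishes $\tilde R_{\mathcal H_R}$ as analytic on all of $(r_-,\infty)$ and $f_\ast$ is real-analytic there, this expression is manifestly smooth (indeed analytic) on $\mathcal B \cup \mathcal H_R \cup \mathcal R$. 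That it solves \eqref{eq:waveequation} on the interior follows because $R_{\mathcal H_R}$ solves the radial o.d.e.\ \eqref{eq:radialode1} on $(r_-,r_+)$ as well, so the separated mode solves the equation on the open sets $\mathcal R$ and $\mathcal B$; being a smooth function satisfying the equation on a dense open subset, it satisfies it across $\mathcal H_R$ by continuity (equivalently, by unique continuation of the real-analytic $(\Box_g + \tfrac{\alpha}{l^2})\psi_\ell$ from $\mathcal R$).

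For the Cauchy data, smoothness of $(\Psi_0,\Psi_1) = (\psi_\ell|_\Sigma, \hat n_\Sigma\psi_\ell|_\Sigma)$ is immediate from the smoothness of $\psi_\ell$ and of $\Sigma$, giving $(\Psi_0,\Psi_1)\in C^\infty(\Sigma)\times C^\infty(\Sigma)$. The substance is norm finiteness, and since the interior portion of $\Sigma$ is compact I would reduce everything to the asymptotic region $\Sigma\cap\{r\geq R\} = \{t=0\}\cap\{r\geq R\}$. There, boundary condition~\ref{item2} gives $\lim_{r\to\infty} r^{3/2-\sqrt{9/4-\alpha}}R_\ell(r)=0$, which by the linear independence of $R_D,R_N$ in \cref{lem:defnrs} forces $R_\ell = \lambda R_D$ for some $\lambda\neq 0$; hence $\psi_\ell \sim r^{-3/2-\sqrt{9/4-\alpha}}\tilde R_D\, Y_{\ell,0}$ with $\tilde R_D\to 1$. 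For $\Psi_1$ I would compute the future unit normal to $\{t=0\}$, giving $\hat n_\Sigma\psi_\ell = \tfrac{r^2}{\sqrt\Delta}\partial_t\psi_\ell = -i\omega_\ell\tfrac{r^2}{\sqrt\Delta}\psi_\ell \sim -i\omega_\ell\psi_\ell$ for large $r$, so that $\|\Psi_1\|_{\underline L^2(\Sigma)}^2 \sim |\omega_\ell|^2\int |\psi_\ell|^2\,\d r\,\d\sigma_{\mathbb S^2}$ converges because the exponent $-3-2\sqrt{9/4-\alpha}<-1$. For $\Psi_0\in \underline H_0^1(\Sigma)$ I would insert $f^{-1}\psi_\ell \sim r^{-2\sqrt{9/4-\alpha}}\tilde R_D$ into the twisted derivative, so that the radial contribution to the $\underline H^1$ integrand is $\sim (|\tilde R_D|^2 + r^2|\partial_r\tilde R_D|^2)\,r^{-1-2\sqrt{9/4-\alpha}}$; this is integrable near $r=\infty$ precisely because \cref{lem:defnrs} guarantees that $\tilde R_D$ and $r^2\partial_r\tilde R_D = -\partial_s\tilde R_D$ (with $s=1/r$) extend analytically across $r=\infty$ and hence remain bounded. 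The angular derivatives carry an extra factor $\sim \ell(\ell+1)$ but the identical radial weight, so they are integrable too, and the Dirichlet decay from~\ref{item2} places $\Psi_0$ in the completion $\underline H_0^1$ rather than merely $\underline H^1$.

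The main obstacle I anticipate is the asymptotic analysis at the conformal boundary $\mathcal I$: one must track the twisting function $f = r^{-3/2+\sqrt{9/4-\alpha}}$ carefully so that the correct (Dirichlet) branch is isolated and the weighted norms converge, and the cancellation of the leading $r^{-3/2-\sqrt{9/4-\alpha}}$ behaviour against the weights in $\|\cdot\|_{\underline H^1}$ and $\|\cdot\|_{\underline L^2}$ is exactly what the Breitenlohner--Freedman bound $\alpha<9/4$ is present to guarantee. By contrast, the smooth extension across $\mathcal H_R$ is essentially bookkeeping once the correct Eddington--Finkelstein form is written down, since all the regularity is already encoded in the analyticity of $\tilde R_{\mathcal H_R}$ supplied by \cref{lem:defnrs}.
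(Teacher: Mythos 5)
Your proposal is correct and follows essentially the same route as the paper: identify $R_\ell$ with $R_{\mathcal H_R}$ via boundary condition~\ref{item1}, rewrite the mode in ingoing Eddington--Finkelstein coordinates so the analyticity of $\tilde R_{\mathcal H_R}$ on $(r_-,\infty)$ gives the extension, then identify $R_\ell=\lambda R_D$ via boundary condition~\ref{item2} and check the weighted $\underline L^2$ and twisted $\underline H^1$ integrals at $\mathcal I$ exactly as in the paper. Your added remarks (the continuity argument for solving the equation across $\mathcal H_R$, and the role of the Breitenlohner--Freedman bound in the convergence of the $r^{-1-2\sqrt{9/4-\alpha}}$ weight) are correct elaborations of steps the paper leaves implicit, not a different argument.
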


Now, we finally proceed to the proof of \cref{thm:maintheorem}.

\begin{proof}[Proof of {\cref{thm:maintheorem}}]
From \cref{prop:qnmextendssmoothly} we have that for each $\ell$ sufficiently large, the  quasinormal mode $\psi_\ell(v,r,\theta,\varphi)$ is a smooth solution to \eqref{eq:waveequation} for $r\in (r_-,\infty)$, $v \in \mathbb R$, $(\theta,\varphi)\in \mathbb S^2$ and satisfies   
\begin{align}
		\psi_\ell (v,r ,\theta,\varphi) = e^{-i \omega_\ell v }e^{i \omega f_\ast(r)}  \tilde R_{{\mathcal H}_R}(\omega_\ell, \ell, r) Y_{\ell,0}(\theta) 
\end{align}
for some $\tilde R_{\mathcal H_R}(\omega_\ell, \ell, r)$ which is analytic for $r\in (r_-,\infty)$. 
We   also showed in \cref{prop:qnmextendssmoothly} that each $\psi_\ell$ arises from initial data $(\Psi_0,\Psi_1) \in C^\infty(\Sigma)\times C^\infty(\Sigma)\cap \underline H_0^1(\Sigma) \times \underline L^2(\Sigma)$.

Now, in view of \eqref{eq:defnuhra}--\eqref{eq:defnuchla} and \cref{lem:defnofRandT}, we have 
\begin{align}\label{eq:termstolookat}
	\psi_\ell(t,r,\theta,\varphi)   = \frac{r_+ e^{i \omega_\ell f_\ast(r_+) }}{r_-  } e^{-i\omega_\ell t} \left(  \frac{\mathfrak T (\omega_\ell,\ell)}{ e^{i\omega_\ell g_\ast(r_-) } }R_{\mathcal{CH}_L} (\omega_\ell ,\ell,r)+ \frac{ \mathfrak R(\omega_\ell ,\ell) }{ e^{-i\omega_\ell g_\ast(r_-) } } R_{\mathcal{CH}_R}(\omega_\ell ,\ell,r) \right) Y_{\ell,0}(\theta) .
\end{align}
We first note that  $e^{-i \omega_\ell t }  R_{\mathcal{CH}_R}(\omega_\ell, \ell, r)Y_{\ell,0}(\theta)$ appearing in the   second term extends smoothly across $\mathcal{CH}_R$. Indeed, 
\begin{align} e^{-i\omega_\ell t }  R_{\mathcal{CH}_R} (\omega_\ell, \ell, r) = e^{-i\omega_\ell t } (r - r_-)^{ \frac{ i \omega_\ell}{2\kappa_-}  } \tilde R_{\mathcal{CH}_R} (\omega_\ell ,\ell,r) = e^{i \omega_\ell u} e^{ - i  \omega_\ell g_\ast(r)} \tilde R_{\Ch_R}(\omega_\ell ,\ell,r),\end{align}
 where    $\tilde R_{\Ch_R}(\omega_\ell, \ell, r)$ and $e^{-i\omega_\ell g_\ast(r)}$ extend smoothly to $r=r_-$ in view of \cref{lem:defnrs}. 

Thus, in order to address the regularity of $\psi_\ell$ at the Cauchy horizon, we now focus on the first term of \eqref{eq:termstolookat}. We compute 
\begin{align} \nonumber 
	e^{-i \omega_\ell t } R_{\mathcal{CH}_L} (\omega_\ell ,\ell,r) &
	=  	e^{-i \omega_\ell t } (r-r_-)^{ -\frac{i\omega_\ell}{2 \kappa_- }} \tilde R_{\mathcal{CH}_L} (\omega_\ell ,\ell,r).\\ \nonumber
	& =  (r-r_-)^{ \frac{-i\omega_\ell}{ \kappa_- }}	e^{-i \omega_\ell (t-r^\ast) }   e^{ - i  \omega_\ell g_\ast(r)} \tilde R_{\mathcal{CH}_L} (\omega_\ell ,\ell,r)\\
	& = (r-r_-)^{ \frac{-i\omega_\ell}{ \kappa_- }}	e^{ i \omega_\ell u } e^{ - i  \omega_\ell g_\ast(r)} \tilde R_{\mathcal{CH}_L} (\omega_\ell ,\ell,r).
\end{align}
As above, $\tilde R_{\mathcal{CH}_L} (\omega_\ell ,\ell,r)$, $e^{ - i  \omega_\ell g_\ast(r)}$ and  $		e^{ i \omega_\ell u }$ extend smoothly to $\Ch_R$. Moreover, in view of \eqref{eq:h1dotnorm}  and since $\mathfrak T (\omega_\ell,\ell)\neq 0$ in view of \cref{eq:transmissioncoefnotzero},  it suffices to show that 
\begin{align}\label{eq:blowupofr-r-}
	\| (r-r_-)^{ \frac{-i\omega_\ell}{ \kappa_- }}	\|_{H^1([r_-, r_-+\epsilon])} =  \infty 
\end{align}
for some $\epsilon>0$. A direct computation shows that  \eqref{eq:blowupofr-r-} holds true for \begin{align} \label{eq:boundbeta} \beta:= \frac{-\operatorname{Im}(\omega_\ell) }{ \kappa_- } <   \frac{1}{2}\end{align}  which holds true for any  $\ell\geq \tilde \ell $ for some $ \tilde \ell$ sufficiently large.

Finally we show the last claim of \cref{thm:maintheorem} regarding the genericity. We have shown that for each $\ell\geq \tilde \ell$ for some $\tilde \ell$ sufficiently large, we obtain a solution which fails to be in $H^1_{\mathrm{loc}}$ at the Cauchy horizon. Further, by   orthogonality  of the spherical harmonics in $L^2(\mathbb S^2)$, we also have that    any non-trivial linear combination of elements in $\{ \psi_\ell\colon \ell\geq \tilde\ell\} $   satisfies \eqref{eq:blowup}. Thus, there exists an infinite dimensional subspace of the space of initial data $C^\infty(\Sigma)\times C^\infty(\Sigma)\cap \underline H_0^1(\Sigma) \times \underline L^2(\Sigma)$  with the property that any non-zero element leads to a solution satisfying \eqref{eq:blowup}.  
   This shows that the space of initial data $H \subset C^\infty(\Sigma)\times C^\infty(\Sigma)\cap \underline H_0^1(\Sigma) \times \underline L^2(\Sigma)$ for which \eqref{eq:blowup} is false has infinite codimension.
   This concludes the proof. 
\end{proof}
\printbibliography[heading=bibintoc]
\end{document}